\newtheorem{theorem}{Theorem}[section]
\newtheorem{lemma}[theorem]{Lemma}
\theoremstyle{definition} 
\newtheorem{definition}[theorem]{Definition}
{\theoremstyle{remark} \newtheorem{remark}{Remark}}
\newtheoremstyle{empty}
        {\topsep}{\topsep}              %%% space between body and thm
        {\itshape}                      %%% Thm body font
        {}                              %%% Indent amount (empty = no indent)
        {\bfseries}                     %%% Thm head font
        {.}                             %%% Punctuation after thm head
        { }                             %%% Space after thm head
        {\thmnote{\bfseries #3}}       %%% Thm head spec
{\theoremstyle{empty} }
\def\blksquare{\rule{2mm}{2mm}}
\def\qedsymbol{\blksquare}
\newenvironment{proofof}[1]{\begin{proof}[Proof of #1]}{\end{proof}}
\newcommand{\bg}[1]{\medskip\noindent{\it #1}}
\newcommand{\leqnomode}{\tagsleft@true\let\veqno\@@leqno}
\newcommand{\reqnomode}{\tagsleft@false\let\veqno\@@eqno}
\newcommand{\poly}{\operatorname{poly}}
\DeclareMathOperator*{\Exp}{E}
\newcommand{\prob}[2][{}]{\ensuremath{{\textstyle{\boldsymbol \Pr}_{#1}}[#2]}} %% fixed size probability
\newcommand{\probb}[2][{}]{\ensuremath{{\textstyle{\boldsymbol \Pr}_{#1}}\bigl[#2\bigr]}} %% adjustable probability
\newcommand{\E}[2][{}]{\ensuremath{{\textstyle{\boldsymbol \Exp}_{#1}}[#2]}}
\newcommand{\R}{\ensuremath{\mathbb R}}
\newcommand{\Rp}{\ensuremath{\mathbb R_{\geq 0}}}
\newcommand{\ceil}[1]{\ensuremath{\lceil#1\rceil}}
\newcommand{\floor}[1]{\ensuremath{\lfloor#1\rfloor}}
\newcommand{\pr}[1]{\ensuremath{\prob{#1}}}
\newcommand{\scal}{\theta}
\newcommand{\da}{\downarrow}
\newcommand{\mfont}{\mathsf}
\newcommand{\mc}{\mathcal}
\newcommand{\A}{\ensuremath{\mc{A}}}
\newcommand{\al}{\ensuremath{\alpha}}
\newcommand{\kp}{\ensuremath{\kappa}}
\newcommand{\ld}{\ensuremath{\lambda}}
\newcommand{\gm}{\ensuremath{\gamma}}
\newcommand{\eps}{\ensuremath{\epsilon}}
\newcommand{\sg}{{\ensuremath{\sigma}}}
\newcommand{\lvec}{\ensuremath{\overrightarrow{\load}}}
\newcommand{\lvecsg}[1][\sg]{\lvec^{{#1}}}
\newcommand{\scrv}{\ensuremath{Z}}
\newcommand{\vecrv}{\ensuremath{W}}
\newcommand{\sumrv}{\ensuremath{S}}
\newcommand{\boldone}{\ensuremath{\mathbbm{1}}}
\newcommand{\msnorm}{\ensuremath{g}}
\newcommand{\dl}{\ensuremath{\delta}}
\newcommand{\load}{\ensuremath{\mfont{load}}}
\newcommand{\OPT}{\ensuremath{\mfont{OPT}}}
\newcommand{\hi}{\ensuremath{\mfont{hi}}}
\newcommand{\low}{\ensuremath{\mfont{low}}}
\newcommand{\LB}{\ensuremath{\mathit{lb}}}
\newcommand{\poislb}{\ensuremath{\mfont{PoisNormLB}}\xspace}
\newcommand{\stochlb}{\ensuremath{\mfont{StochNormLB}}\xspace}
\newcommand{\stochflb}[1][$f$]{\ensuremath{\mfont{Stoch}}\text{-}{#1}\text{-}\ensuremath{\mfont{LB}}\xspace}
\newcommand{\POS}{\mfont{POS}}
\newcommand{\erv}[2]{\ensuremath{#1^{\geq #2}}} % exceptional random variable
\newcommand{\trv}[2]{\ensuremath{#1^{< #2}}} % truncated random variable
\newcommand{\eff}[2][\lambda]{\beta_{#1}(#2)} % fixed effective size
\newcommand{\topl}[1][\ell]{\ensuremath{\mfont{Top}_{#1}}}
\newcommand{\Topl}[2][\ell]{\ensuremath{\topl[{#1}](#2)}}
\title{A Simple Approximation Algorithm for Vector Scheduling \\ and Applications to Stochastic Min-Norm Load Balancing\thanks{An extended abstract is to appear in the Proceedings of the 5th SOSA, 2022.}} 
\author{
    Sharat Ibrahimpur\thanks{{\tt \{sharat.ibrahimpur,cswamy\}@uwaterloo.ca}.
    Department of Combinatorics and Optimization, University of Waterloo, Waterloo, ON, Canada N2L 3G1. 
    Supported in part by NSERC grant 327620-09 and an NSERC Discovery Accelerator
    Supplement Award.}
\and 
\addtocounter{footnote}{-1} 
    Chaitanya Swamy\footnotemark
}
\date{}
\begin{document}

\maketitle	
	
\begin{abstract}

We consider the \emph{Vector Scheduling} problem on identical machines: we have $m$ machines, and a set $J$ of $n$ jobs, where each job $j$ has a processing-time \emph{vector} $p_j \in \Rp^d$.
The goal is to find an assignment $\sg : J \to [m]$ of jobs to machines so as to minimize the \emph{makespan} $\max_{i \in [m]} \max_{r \in [d]} \bigl( \sum_{j : \sg(j) = i} p_{j,r} \bigr)$. 
A natural lower bound on the optimal makespan is $\LB := \max \{\max_{j \in J,r \in [d]} p_{j,r}, \max_{r \in [d]} (\sum_{j \in J} p_{j,r} / m) \}$.
Our main result is a \emph{very simple} $O(\log d)$-approximation algorithm for vector scheduling with respect to the lower bound $\LB$: we devise an algorithm that returns an assignment whose makespan is at most $O(\log d) \cdot \LB$.

\smallskip

As an application, we show that the above guarantee leads to an $O(\log \log m)$-approximation for \emph{stochastic minimum-norm load balancing} ($\stochlb$).
In {\stochlb}, we have $m$ identical machines, a set $J$ of $n$ independent stochastic jobs whose processing times are nonnegative random variables, and a monotone, symmetric norm $f : \R^m \to \Rp$.
The goal is to find an assignment $\sg : J \to [m]$ that minimizes the \emph{expected $f$-norm} of the machine-load vector $\lvecsg$, where the $i^\mathrm{th}$ coordinate of $\lvecsg$ is the (random) total processing time assigned to machine $i$.
Our $O(\log \log m)$-approximation guarantee is in fact much stronger: we obtain an assignment that is {\em simultaneously} an $O(\log \log m)$-approximation for \stochlb with {\em all} monotone, symmetric norms. 
Next, this approximation factor significantly improves upon the $O(\log m/\log \log m)$-approximation in (Ibrahimpur and Swamy, FOCS 2020) for \stochlb, and is a consequence of a more-general \emph{black-box} reduction that we present, showing that a $\gm(d)$-approximation for $d$-dimensional vector scheduling \emph{with respect to the lower bound $\LB$} yields a simultaneous $\gm(\log m)$-approximation for \stochlb with all monotone, symmetric norms.
We emphasize that it is crucial for this reduction that the approximation guarantee for vector scheduling is with respect to the lower bound $\LB$.  

\end{abstract}

\section{Introduction} \label{sec:intro}

We consider the well-studied \emph{Vector Scheduling} problem on identical machines, a natural generalization of the (scalar) makespan minimization problem to a setting with $d \geq 1$ dimensions.
We have a set $J$ of $n$ jobs that are to be processed on exactly one of $m$ identical machines.
We use $[N]$ to denote $\{1,\dots,N\}$. 
Each job $j \in J$ has a processing-time (or size) \emph{vector} $p_j = (p_{j,1},\dots,p_{j,d}) \in \Rp^d$, where $p_{j,r}$ denotes the size of job $j$ in dimension $r \in [d]$; for example, dimensions could correspond to processor cycles, storage space, or network bandwidth needed to complete the job.
An assignment $\sg : J \to [m]$ of jobs to machines induces an $(m \times d)$-dimensional load vector $\lvecsg$ (one entry per machine-dimension pair): for machine $i \in [m]$ and dimension $r \in [d]$, the load in the $r^\mathrm{th}$ dimension of machine $i$ is $\lvec_{i,r} := \sum_{j : \sg(j) = i} p_{j,r}$.
The \emph{makespan objective} of an assignment $\sg$ is defined as $\max_{i \in [m], r \in [d]} \lvec_{i,r}$ i.e., the maximum load across all machines and all dimensions.
The goal in vector scheduling is to find an assignment $\sg$ that minimizes the makespan.

Vector scheduling (a.k.a. vector load balancing) was first considered by Chekuri and Khanna \cite{ChekuriK04} who gave an $O(\log^2 d)$-approximation algorithm for the problem.
Meyerson, Roytman and Tagiku in \cite{MeyersonRT13} improved the approximation guarantee to a factor $O(\log d)$; in fact, they gave an $O(\log d)$-competitive algorithm for the online version of the problem where jobs arrive one at a time and have to be assigned irrevocably to a machine on arrival.
The current best approximation for the problem is an $O(\log d/\log \log d)$-competitive algorithm by Im, Kell, Kulkarni and Panigrahi \cite{ImKKP19}.
In terms of hardness, very recently, Sai Sandeep \cite{Sandeep21} showed that, for any $\eps > 0$, (offline) vector scheduling is hard to approximate to a factor $O((\log d)^{1-\eps})$ under some complexity theoretic assumptions.

%%%% [edit-sharat] explicitly calling j as a job and r as a dimension for clarity.
{
A natural lower bound on the optimal makespan in vector scheduling is given by:
\begin{equation} \label{lbdefn}
\LB := \max \left\{\max_{\substack{\text{job } j \in J \\ \text{dimension } r \in [d]}} p_{j,r}, \, \quad \frac{1}{m} \cdot \max_{\text{dimension } r \in [d]} \sum_{\text{job } j \in J} p_{j,r} \right\}.
\end{equation}
}
\noindent {
The first term above arises because each job has to assigned to some machine, and the second term arises because in each dimension $r$, a total load of $\sum_j p_{j,r}$ is distributed across $m$ machines.}  

{Motivated by applications in stochastic load balancing,
the main question that initiated this work is: how well can we approximate vector
scheduling with respect to the natural lower bound $\LB$?}
{\em We devise a simple $\LB$-relative
randomized $O(\log d)$-approximation algorithm} (see Theorem~\ref{thm:mainvs}). 

We make a few remarks comparing this result with some of of the (previously mentioned)
prior work on vector scheduling. The $O(\log d)$ approximation guarantee of Meyerson et  
al.~\cite{MeyersonRT13} is also with respect to $\LB$, although it is not explicitly
stated in this form.  
While their guarantee is deterministic and holds also in the online setting, 
our chief notable feature is the simplicity of our algorithm and analysis.%
\footnote{The algorithm of Meyerson et al. is also fairly simple---it assigns the
  recently-arrived job to the machine that leads to the least increase in an exponential
  potential function---but the analysis is slightly involved.}
From a pure approximation-standpoint, the $O(\log d/\log \log d)$-approximation of
\cite{ImKKP19} is of course better, %strictly superior to our result 
but from the description of their algorithm it is unclear if their guarantee holds
with respect to the lower bound $\LB$; also, their algorithm and analysis are 
significantly more involved. 
Finally, we note that the hardness result in \cite{Sandeep21} essentially shows that our approximation guarantee is tight up to $\poly(\log \log d)$ factors. 

{While the simplicity and rather evident nature of the lower bound $\LB$ make it appealing
to design $\LB$-relative approximation algorithms for vector scheduling,
there are also other benefits of working with this lower bound.}
As an application, we show how our results on vector scheduling lead to useful approximation algorithms for a problem in stochastic load balancing that has recently received renewed interest. 
In the stochastic load balancing model that we consider, job sizes are (scalar) random variables with known distributions, and we have $m$ identical machines.
That is, the size of a job $j \in J$ is distributed as a nonnegative random variable $X_j$.
Throughout this paper, we assume that the job random variables are independent of each other.
An assignment $\sg: J \to [m]$ induces a random $m$-dimensional load vector $\lvecsg$, where the $i^\mathrm{th}$ coordinate of $\lvecsg$ is the (random) total processing time assigned to machine $i \in [m]$ i.e., $\lvecsg_i := \sum_{j : \sg(j) = i} X_j$.
In the \emph{stochastic minimum norm load balancing} (\stochlb) problem, we have $n$ stochastic jobs $\{X_j\}_{j \in J}$, $m$ identical machines, and a monotone symmetric norm $f : \R^m \to \Rp$.
Recall that $f$ being a norm means that: (i) $f(x) = 0$ if and only if $x = 0$; (ii) $f(\scal x) = |\scal| f(x)$ for all $x \in \R^m, \scal \in \R$; and (iii) $f(x+y) \leq f(x) + f(y)$ for all $x,y \in \R^m$.
A \emph{monotone} norm $f$ satisfies $f(x) \leq f(y)$ for all $0 \leq x \leq y$ (coordinate-wise inequality), and \emph{symmetry} is the property that permuting the coordinates of $x$ does not change its norm.
The goal in \stochlb is to find an assignment $\sg$ that minimizes $\E{f(\lvecsg)}$ --- the expected $f$-norm of the induced load vector --- where the expectation is over the randomness in the job-size distributions.
We emphasize that all jobs are assigned to machines up front without the knowledge of the job-size realizations.
We sometimes use \stochflb to explicitly indicate the norm $f$ that we are considering in an instance of \stochlb.  
 
Kleinberg, Rabani and Tardos \cite{KleinbergRT00} were the first to consider the \emph{stochastic makespan minimization} problem (\stochlb when $f$ is the $\ell_\infty$ norm) and gave an $O(1)$-approximation algorithm.
Subsequent work by \cite{GoelI99,GuptaKNS21,Molinaro19,ChakrabartyS19a,ChakrabartyS19b,IbrahimpurS20,DeKLN20,IbrahimpurS21} have focused on obtaining approximation algorithms for various \stochlb settings: (i) unrelated-machines setting where job-size distributions can depend on the machine; (ii) the norm $f$ is an $\ell_p$ norm or a $\topl$ norm (where $\Topl{x}$ is defined as the sum of $\ell$ largest coordinates of $x$); and (iii) job-sizes distributions are Bernoulli, Exponential, Poisson random variables, or even deterministic ($X_j = p_j$ with probability $1$).
We elaborate on related work in Section~\ref{related}.
The result most relevant to this work is the $O(\log m/\log \log m)$-approximation algorithm of \cite{IbrahimpurS20} for \stochlb when $f$ is an arbitrary monotone symmetric norm and jobs are arbitrarily distributed.
In this work, we give an exponential improvement in the approximation guarantee for \stochlb when machines are identical, and to do this we crucially use the $\LB$-relative approximation algorithm for vector scheduling.

\subsection{Our Results and Techniques} \label{ourwork}

We formally state our contributions in this work with a brief overview of the techniques used.
Our first main result is a simple $O(\log d)$-approximation algorithm for vector scheduling where the approximation guarantee is with respect to the natural lower bound.

\begin{theorem} \label{thm:mainvs}
There is a randomized algorithm for the $d$-dimensional vector scheduling problem that computes an assignment $\sg$ with makespan at most $O(\log d) \cdot \LB$.
The algorithm runs in time polynomial in the input size, and succeeds in finding an approximate assignment with probability at least $2/3$.
\end{theorem}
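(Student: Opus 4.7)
The plan is to scale so that $\LB=1$, which makes $p_{j,r}\in[0,1]$ for every job-dimension pair and $\sum_{j}p_{j,r}\leq m$ in every dimension $r$. The simplest candidate algorithm, which I would try first, is to assign each job independently to a uniformly random machine; this is the randomized rounding of the natural assignment LP, whose feasibility with $x_{ij}=1/m$ is trivial on identical machines and already guarantees fractional makespan at most $\LB$.

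For the analysis, the key tool is a multiplicative Chernoff bound. Fix any pair $(i,r)$: the load $\load_{i,r}$ is a sum of independent variables $p_{j,r}\,\boldone[\sigma(j)=i]$, each supported in $[0,1]$ with total mean at most $1$, so Chernoff yields $\Pr[\load_{i,r}>t]\leq (e/t)^{t}$ for $t\geq e$. Choosing $t=\Theta(\log d)$ with a large enough implicit constant drives this probability below $1/(3md)$, after which a union bound over the $md$ machine-dimension pairs would conclude that the overall makespan is $O(\log d)\cdot\LB$ with probability at least $2/3$.

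The main obstacle is that the naive union bound actually yields $O(\log(md))$ rather than $O(\log d)$: when $m$ is super-polynomial in $d$, the $\log m$ term is not absorbed by the $(\log d)^{\Theta(\log d)}$ Chernoff decay. A harmless first reduction is that we may assume $m\leq n$, since otherwise assigning each job to its own machine immediately certifies makespan $\max_{j,r}p_{j,r}\leq \LB$. To push further, I would partition jobs into a \emph{large} class (those with some coordinate exceeding $\LB/\log d$) and a complementary \emph{small} class, and analyze each separately: the per-dimension mass bound $\sum_j p_{j,r}\leq m$ limits the number of large jobs falling on any machine in any fixed dimension via a Chernoff bound on counts, while the sharper per-summand bound $\LB/\log d$ on small jobs upgrades Chernoff enough to beat the $m$-factor in a union bound.

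If this two-regime argument still leaks a $\log m$ slack, the fallback would be to replace the oblivious rounding by an offline analog of the exponential-potential greedy scheme of~\cite{MeyersonRT13}, where each job is assigned to minimize the one-shot increase in $\sum_{i,r}\alpha^{\load_{i,r}}$ for an appropriate base $\alpha$; a standard potential argument then delivers the $O(\log d)\cdot\LB$ bound without any $m$-dependence, and the probability $2/3$ in the statement would be ensured by standard boosting via independent repetitions certified against the easily-computable $\LB$.
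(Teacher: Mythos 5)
The obstacle you correctly flag---a union bound over $md$ machine--dimension pairs yields $O(\log(md))$ rather than $O(\log d)$---is real, but neither of your first two fixes can resolve it, because uniform random assignment is \emph{provably} incapable of achieving makespan $O(\log d)\cdot\LB$. Take any fixed $d\geq 2$, let $m$ be large, and let $n=m$ jobs each have $p_j=(1,\dots,1)$, so $\LB=1$. Uniform random assignment is then exactly throwing $m$ balls into $m$ bins, and the maximum load is $\Theta(\log m/\log\log m)$ with high probability, which is unbounded as $m\to\infty$ while $O(\log d)\cdot\LB$ stays constant. The large/small split does not escape this: with $d$ constant the threshold $\LB/\log d$ is $\Theta(\LB)$, so essentially every job is ``large,'' and the maximum \emph{count} of large jobs per machine is again $\Theta(\log m/\log\log m)$ by the same balls-into-bins lower bound; the Chernoff decay you invoke is not strong enough to beat a union bound over $m$ machines when $m$ is super-exponential in $d$. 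The $m\leq n$ reduction only replaces $\log m$ by $\log n$, which can still dominate $\log d$. Your fallback to the exponential-potential greedy of Meyerson et al.\ would indeed establish the theorem, but only by citing their analysis wholesale rather than providing a new argument.

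The paper's algorithm takes a genuinely different route that eliminates the $md$-pair union bound entirely by \emph{filling one machine at a time}. It samples $S\subseteq J$ by including each job independently with probability $7/m$, so that in every dimension $r$ the sample load $\sum_{j\in S}p_{j,r}$ has mean $\Theta(\log d)$. Chernoff upper \emph{and lower} tails (Lemma~\ref{lem:goodsubset}) then show that with probability $\geq 1/2$, \emph{simultaneously for all} $r\in[d]$, the sample load is at most $14\log d$ and the residual load $\sum_{j\in J\setminus S}p_{j,r}$ is at most $(m-1)\log d$; the union bound here is over $d$ dimensions only, and the lower-tail bound is exactly what certifies that the residual instance is again a valid instance with $m-1$ machines. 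Since the event ``this sample is good'' is efficiently checkable, trying $N=O(\log m)$ independent samples per iteration drives the per-iteration failure probability below $1/(3m)$, and a union bound over the $m$ iterations gives overall success probability $\geq 2/3$. Thus $\log m$ enters only the number of repetitions (i.e., the running time), never the load bound---which is precisely the structural feature your one-shot analysis cannot reproduce.
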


Our vector-scheduling algorithm in Theorem~\ref{thm:mainvs} is extremely simple to implement since it is based on randomly sampling jobs to be processed on a machine. Its analysis is equally simple because it only uses the well-known Chernoff tail bounds.
By scaling job-size vectors, we may assume without loss of generality that $\LB = 1$. 
In particular, this implies $p_{j,r} \in [0,1]$ for any job $j \in J$ and dimension $r \in [d]$, and $\sum_j p_{j,r} \leq m$ for any $r$. 
For simplicity, further assume that $\sum_j p_{j,r} = m$ for all $r$.
Consider a random job-set $S \subseteq J$ where job $j$ is independently included in $S$ with probability $c\log d/m$ for a sufficiently large constant $c \geq 1$.
By our choice of $S$, for any dimension $r \in [d]$, the expected total size of jobs in $S$ in that dimension is $c \log d$.
By Chernoff tail bounds, in any dimension $r$, we have $\pr{1 \leq \sum_{j \in S} p_{j,r} \leq O(\log d)} \leq 1/\poly(d)$.
By union bound, with some positive probability, $\sum_{j \in S} p_{j,r}$ is in $[1,O(\log d)]$ simultaneously for all dimensions $r \in [d]$.
We assign the job-set $S$ to one of the $m$ machines, and recurse on the residual instance with job-set $J \setminus S$ and $m-1$ identical machines; note that $\LB$ for the residual instance is at most $1$.
We formalize the argument in Section~\ref{sec:vs}.

Our second main result is an $O(\log \log m)$-approximation for \stochlb.

\begin{theorem} \label{thm:stochlbapx}
We can compute (in polynomial time) an assignment $\sg : J \to [m]$ such that, with probability at least $2/3$, $\sg$ is an $O(\log \log m)$-approximation to the given instance of \stochlb.
\end{theorem}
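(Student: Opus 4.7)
The plan is to derive Theorem~\ref{thm:stochlbapx} by feeding the $\LB$-relative vector-scheduling algorithm of Theorem~\ref{thm:mainvs} into the black-box reduction advertised in the abstract: a $\gm(d)$-approximation for $d$-dimensional vector scheduling \emph{measured against the natural lower bound $\LB$} yields a simultaneous $\gm(\log m)$-approximation for \stochlb holding over all monotone, symmetric norms. Plugging in $\gm(d) = O(\log d)$ then gives $\gm(\log m) = O(\log \log m)$, exactly the claimed guarantee, and in fact yields a single assignment that is simultaneously $O(\log \log m)$-approximate for every monotone, symmetric norm.

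To construct the reduction, I would proceed in three stages. First, reduce the family of monotone, symmetric norms to a canonical finite family of size $O(\log m)$: any such $f$ on $\R^m$ is within an $O(1)$ factor of a nonnegative combination of the top-$\ell$ norms $\topl$ for dyadic $\ell \in \{1, 2, 4, \ldots, 2^{\ceil{\log m}}\}$, so it suffices to output a single assignment that is $O(\log \log m)$-approximate for \stochlb instantiated with $f = \topl$ for each such $\ell$. Second, for each job $j \in J$ and each dyadic index $\ell$, define a deterministic effective-size coordinate $p_{j,\ell} \in \Rp$ that aggregates the expected truncated contribution $\E{X_j \cdot \boldone[X_j \leq \tau_\ell]}$ at an $\ell$-scaled threshold $\tau_\ell$ with the expected exceedance $\E{X_j \cdot \boldone[X_j > \tau_\ell]}$ (suitably scaled); this produces a deterministic $d$-dimensional vector-scheduling instance with $d = O(\log m)$. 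Third, establish two linking inequalities: (a) for every assignment $\sg$, $\E{\Toplb{\lvecsg}} \leq O(\ell) \cdot \max_{i \in [m]} \sum_{j : \sg(j) = i} p_{j,\ell}$, via a Chernoff concentration of sub-threshold sums on each machine together with a union bound over the super-threshold (exceedance) events; and (b) the vector-instance lower bound $\LB$ of the $p$-vectors satisfies $\LB \leq O(1/\ell) \cdot \min_\tau \E{\Toplb{\lvec^\tau}}$. Combining (a) and (b) with the output of Theorem~\ref{thm:mainvs} applied to the $p$-vectors yields, for every dyadic $\ell$ simultaneously, an assignment $\sg$ with $\E{\Toplb{\lvecsg}} \leq O(\log d) \cdot O(\ell) \cdot \LB \leq O(\log \log m) \cdot \min_\tau \E{\Toplb{\lvec^\tau}}$, which by stage one extends to all monotone, symmetric norms up to a further constant.

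The main obstacle is the second half of stage three, namely inequality (b). The first term of~\eqref{lbdefn} (the per-coordinate maximum of the $p_{j,\ell}$) is relatively benign, because some machine must absorb each job in its entirety and so the largest effective size is unavoidable for any assignment. The harder part is the average term $\tfrac{1}{m} \sum_j p_{j,\ell}$: one must argue that \emph{every} assignment $\tau$, not just good ones, incurs $\E{\Toplb{\lvec^\tau}} \geq \Omega(\ell/m) \cdot \sum_j p_{j,\ell}$, which forces the effective-size definition in stage two to be carefully tuned so that sub-threshold expectations and super-threshold exceedance contributions each appear as legitimate lower bounds on the expected $\topl$-mass of any load vector. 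Crucially, it is exactly because Theorem~\ref{thm:mainvs} is stated relative to $\LB$ (rather than relative to the OPT of the deterministic surrogate, or an LP relaxation thereof) that this reduction composes cleanly across the $O(\log m)$ dyadic norms \emph{using a single assignment}; a surrogate-OPT-relative guarantee would lose additional factors in the composition and destroy the simultaneity.
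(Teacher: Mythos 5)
Your high-level skeleton matches the paper exactly: reduce all monotone symmetric norms to dyadic $\topl$ norms, then for each such $\ell$ define a deterministic per-job scalar $p_{j,\ell}$, bundle these into a $d=O(\log m)$-dimensional vector-scheduling instance, and apply the $\LB$-relative $O(\log d)$-approximation of Theorem~\ref{thm:mainvs}. That is indeed the route the paper takes, and your observation that it is \emph{precisely} the $\LB$-relativeness of the guarantee that lets the $O(\log m)$ dyadic coordinates be serviced by one assignment is the right reason this composes. However, both of the technical ingredients you gesture at are replaced by weaker (and, as stated, incorrect) surrogates, and these gaps are not cosmetic.

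First, the reduction to $\topl$ norms. You claim that any monotone symmetric norm $f$ is within $O(1)$ of a \emph{single} nonnegative combination of dyadic $\topl$ norms and that one could then conclude by linearity of expectation. That is not true: such an $f$ is a pointwise \emph{supremum} of nonnegative combinations of $\topl$ norms (by Ky Fan / majorization duality), and $\E{\sup}$ does not commute with $\sup\E$. The correct and genuinely nontrivial statement is the approximate stochastic majorization theorem (Theorem~\ref{thm:stochmaj}): if $Y,W$ follow product distributions on $\Rp^m$ and $\E{\Topl{Y}}\le\al\E{\Topl{W}}$ for all dyadic $\ell\in\POS$, then $\E{\msnorm(Y)}\le 56\al\,\E{\msnorm(W)}$ for every monotone symmetric norm $\msnorm$. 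This is a probabilistic result about independent random vectors, not a deterministic norm identity, and it is an imported theorem rather than something you can wave away as a nonnegative decomposition.

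Second, and more substantially, the definition of $p_{j,\ell}$. You propose to take a threshold $\tau_\ell$ and set $p_{j,\ell}$ to an aggregation of the first moments $\E{X_j\cdot\boldone_{X_j\le\tau_\ell}}$ and $\E{X_j\cdot\boldone_{X_j>\tau_\ell}}$, then prove a Chernoff-type upper bound (your inequality (a)) and a universal lower bound (your inequality (b)). First moments cannot deliver either: controlling $\sum_{j:\sg(j)=i}\E{\trv{X_j}{\tau_\ell}}$ gives no control on the upper tail of the machine load (you need the MGF, not the mean), and the converse bound fails for the same reason. The paper instead separates the two sub-instances. The exceptional sub-instance (mass $\ge t_\ell$) is handled by Lemma~\ref{lem:exceptopl}, which shows that \emph{every} assignment is constant-competitive there, so it never enters the vector-scheduling instance at all. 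For the truncated sub-instance it sets $p_{j,\ell}:=\eff[\ld_\ell]{\trv{X_j}{t_\ell}/4t_\ell}$ with $\ld_\ell=\floor{2m/\ell}$, i.e. the $\ld_\ell$-\emph{effective size} $\log_{\ld_\ell}\E{\ld_\ell^{\,\cdot}}$, which is a log-MGF. This is precisely the quantity that makes both linking inequalities go through: Lemma~\ref{lem:smallbeta} gives the upper-tail and exceedance control needed for (a), and Lemma~\ref{lem:largebeta} gives the converse lower bound needed for (b). You flag the difficulty in (b) and remark that the effective-size definition must be ``carefully tuned,'' but you stop short of identifying the tuning, and the one you propose does not work. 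Finally, the threshold $t_\ell$ itself is nontrivial: the paper binary-searches for the smallest $t_\ell$ at which both the exceptional first-moment budget and the truncated effective-size budget are satisfied, and the contrapositive of the two part-(ii) lemmas then yields $\OPT_\ell=\Omega(\ell t_\ell)$; your proposal leaves $\tau_\ell$ unspecified.
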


In fact, we prove a stronger result: we return an assignment that \emph{simultaneously} achieves an $O(\log \log m)$-approximation for \emph{all} monotone, symmetric norms.

\begin{theorem} \label{thm:simulapx}
Consider $n$ stochastic jobs $\{X_j\}_{j \in J}$ and $m$ identical machines.
We can compute (in polynomial time) an assignment $\sg : J \to [m]$ such that, with
probability at least $2/3$, $\sg$ is simultaneously an $O(\log \log m)$-approximation to all \stochflb[\msnorm] instances where $\msnorm: \R^m \to \Rp$ is a monotone, symmetric norm.
Moreover, given an $\LB$-relative $\gm(d)$-approximation algorithm $\A$ for $d$-dimensional vector scheduling, we can use one call of $\A$ to obtain an $O(\gm(\log m))$ simultaneous-approximation for \stochflb[\msnorm] for all monotone, symmetric norms $\msnorm$.
\end{theorem}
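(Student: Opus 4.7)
The plan is to reduce \stochlb, \emph{uniformly} over all monotone symmetric norms, to a single $d$-dimensional vector scheduling instance with $d = O(\log m)$, and then invoke algorithm $\A$ exactly once. The starting point is a structural fact about monotone symmetric norms on $\R^m$ (used in prior work on ordered/symmetric-norm optimization): every such norm $\msnorm$ is within an $O(1)$ factor of a nonnegative combination of the top-$\ell$ norms $\Topl{x}$, $\ell \in [m]$, and moreover one loses only another $O(1)$ factor by restricting $\ell$ to the $O(\log m)$ dyadic values $\{1,2,4,\ldots,m\}$. Hence it suffices to produce a single assignment $\sg$ that simultaneously $O(\gm(\log m))$-approximates \stochflb[\topl] for each such dyadic $\ell$.

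For each such scale $\ell$ I would use the effective-size machinery $\effb[\lambda_\ell]{\cdot}$ developed in the \stochlb literature to linearize the stochastic top-$\ell$ objective against a threshold $\lambda_\ell$, morally set to $\OPT_\ell / \ell$ and guessed within a factor of $2$ in advance. The upshot is a deterministic coordinate $p_{j,\ell}$ built from $\E[\trv{X_j}{\lambda_\ell}]/\lambda_\ell$ and $\effb[\lambda_\ell]{X_j}$ such that, for any assignment $\sg$, the quantity $\max_{i}\sum_{j:\sg(j)=i} p_{j,\ell}$ is an $O(1)$-faithful proxy for $\E[\Topl{\lvecsg}]/(\ell \lambda_\ell)$. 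Stacking these coordinates across the $O(\log m)$ dyadic $\ell$'s yields a single vector scheduling instance of dimension $O(\log m)$, and the makespan of any assignment on this instance controls all the top-$\ell$ objectives at once.

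The crucial reason for insisting on an $\LB$-relative---rather than $\OPT$-relative---guarantee from $\A$ is that $\LB$ is $O(1)$ in every coordinate of the constructed instance, even though the vector-scheduling optimum itself need not be. Indeed, $p_{j,\ell} = O(1)$, because if $p_{j,\ell}$ were large then placing $j$ alone on any machine would already violate the $O(\OPT_\ell)$ top-$\ell$ budget; and $\tfrac{1}{m}\sum_j p_{j,\ell} = O(1)$, because the optimal \stochlb assignment spreads total effective size at most $O(m \lambda_\ell)$ across $m$ machines. Applying $\A$ to this vector instance produces, with probability at least $2/3$, an assignment whose coordinate-$\ell$ load is $O(\gm(\log m))$ for every $\ell$ at once, which translates back to $\E[\Topl{\lvecsg}] \leq O(\gm(\log m)) \cdot \OPT_\ell$ for every dyadic $\ell$ and hence, by the first paragraph, $\E[\msnorm(\lvecsg)] \leq O(\gm(\log m)) \cdot \OPT_\msnorm$ for \emph{every} monotone symmetric norm $\msnorm$. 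The main obstacle---and the technical heart of the reduction---is establishing the two-sided bound linking $\E[\Topl{\lvecsg}]$ to $\max_i \sum_{j:\sg(j)=i} p_{j,\ell}$ with constants that do not degrade as $\ell$ varies, so that a single vector-scheduling output suffices across all $O(\log m)$ scales simultaneously.
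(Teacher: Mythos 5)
Your overall architecture matches the paper's: reduce to simultaneously approximating $\E[\Topl{\lvecsg}]$ for the $O(\log m)$ dyadic values $\ell\in\POS$, encode each such budget as a coordinate $p_{j,\ell}$ (built from effective sizes of truncated $X_j$'s at a binary-searched threshold), stack these into one $O(\log m)$-dimensional vector-scheduling instance, observe that $\LB$ for this instance is $O(1)$, and invoke $\A$ once. Your paragraph on why $\LB$-relativity (rather than $\OPT_\vs$-relativity) matters is also exactly the right point.

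However, your first paragraph rests on a false structural claim. It is \emph{not} true that every monotone symmetric norm on $\R^m$ is within an $O(1)$ factor of a \emph{single} nonnegative combination of $\topl$ norms. A concrete counterexample: take $\msnorm(x) := \bigl(\Topl[1]{x}\cdot\Topl[m]{x}\bigr)^{1/2}$, which is easily checked to be a monotone symmetric norm. Any nonnegative combination $g(x)=\sum_\ell a_\ell\Topl{x}$ with $\msnorm\le g\le C\msnorm$ must have $\sum_\ell a_\ell\le C$ (from $x=e_1$) and $\sum_\ell a_\ell\min(\ell,k)\le C\sqrt{k}$ (from $x$ with $k$ unit coordinates); but then at $x=(\sqrt{k},1,\dots,1,0,\dots,0)$ with $k-1$ unit entries, $g(x)=(\sqrt k-1)\sum_\ell a_\ell+\sum_\ell a_\ell\min(\ell,k)\le 2C\sqrt k$, whereas $\msnorm(x)\approx k^{3/4}$, forcing $C=\Omega(k^{1/4})$. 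So the gap is polynomial in $m$. What \emph{is} true is the majorization inequality ($\Topl{x}\le\Topl{y}$ for all $\ell$ implies $\msnorm(x)\le\msnorm(y)$), which comes from $\msnorm$ being a \emph{maximum} over nonnegative combinations of $\topl$ norms; but a max does not commute with expectation, so even this correct deterministic fact does not directly yield the implication you need, namely that $\E[\Topl{Y}]\le\alpha\E[\Topl{W}]$ for $\ell\in\POS$ forces $\E[\msnorm(Y)]\le O(\alpha)\E[\msnorm(W)]$. That implication for product-distributed $Y,W$ is precisely the \emph{approximate stochastic majorization} theorem (Theorem~\ref{thm:stochmaj}, from~\cite{IbrahimpurS20}), and it is a genuinely nontrivial ingredient that your sketch must cite rather than derive from the (incorrect) nonnegative-combination claim.

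Two smaller points. First, the paper's coordinates are $p_{j,\ell}=\eff[\ld_\ell]{\trv{X_j}{t_\ell}/4t_\ell}$, and the exceptional parts $\erv{X_j}{t_\ell}$ are argued to contribute $O(\ell t_\ell)$ under \emph{any} assignment (Lemma~\ref{lem:exceptopl}(i)); you should make this split explicit rather than folding both pieces into one proxy $p_{j,\ell}$. Second, $p_{j,\ell}\le 1$ follows simply because $\trv{X_j}{t_\ell}/4t_\ell$ is $[0,1/4]$-bounded, so $\eff[\ld_\ell]{\cdot}\le 1/4$; the argument you gave (``placing $j$ alone would violate the top-$\ell$ budget'') is not needed and is a bit circular, since the top-$\ell$ budget is what you are trying to establish.
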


The randomization above stems from our randomized guarantee for vector scheduling. 
We remark that the probability of success in Theorem~\ref{thm:simulapx} is of a detectable (polytime-verifiable) event (see Remark~\ref{badevent}). 
Therefore, we can lower the failure probability to $\dl$, for any $\dl > 0$, by repeating the algorithm in Theorem~\ref{thm:simulapx} $O(\log (1/\dl))$ times. 

At a high level, to prove Theorem~\ref{thm:simulapx} we utilize two key ideas from \cite{IbrahimpurS20}.
First, a technical result from \cite{IbrahimpurS20} called as \emph{approximate stochastic majorization} (Theorem~\ref{thm:stochmaj}) gives a reduction from the problem of approximating all monotone symmetric norms to that of approximating only a small collection of $\topl$ norms.
Formally, a simultaneous $\al$-approximation to all \stochflb[\topl] instances with $\ell \in \{1,2,4,\dots,2^{\floor{\log_2 m}}\}$ implies a simultaneous $O(\al)$-approximation for all \stochlb instances (see Theorem~\ref{thm:simultoplapx}).
Second, we specialize the $O(1)$-approximation algorithm for \stochflb[\topl] found in \cite{IbrahimpurS20} to the identical-machines setting.
We first note that, modulo some technical assumptions, \stochflb[\topl] problem on identical machines is essentially an instance of $1$-dimensional vector scheduling where the (scalar) size of a job $j$ is a certain sophisticated function of its job-size distribution $X_j$.
More crucially, we show that an $\LB$-relative $\al$-approximate assignment for this $1$-dimensional vector scheduling instance yields an $O(\al)$-approximate assignment for the \stochflb[\topl] instance.
Since the expression of $\LB$ for $d$-dimensional vector scheduling has a certain independence among its $d$ dimensions, we can approximate the simultaneous \stochflb[\topl] problem (for all powers-of-$2$ $\ell$) via the $O(\log m)$-dimensional vector scheduling problem.
As Theorem~\ref{thm:logdapx} gives us an $O(\log d)$-approximation for $d$-dimensional vector scheduling, we get an $O(\log \log m)$-approximation in Theorems~\ref{thm:stochlbapx}~and~\ref{thm:simulapx}.
We remark that our algorithms for stochastic load balancing applications only require black-box access to $\LB$-relative approximation algorithms for vector scheduling. 
If the vector-scheduling algorithm is deterministic (randomized), then our algorithm for Theorem~\ref{thm:simulapx} is also deterministic (randomized). 

\subsection{Related Work} \label{related}

The $d$-dimensional vector scheduling problem was first considered by Chekuri and Khanna in~\cite{ChekuriK04} where they gave an $O(\log^2 d)$-approximation algorithm and showed that the problem is NP-hard to approximate within any constant factor.
They also gave a PTAS when $d$ is a fixed constant.
Meyerson et al. \cite{MeyersonRT13} gave an improved $O(\log d)$-approximation and the current best approximation factor of $O(\log d/\log \log d)$ is due to Harris and Srinivasan \cite{HarrisS19} and Im et al. \cite{ImKKP19}.
The results in \cite{MeyersonRT13,ImKKP19} also hold in the online setting, and the result of \cite{HarrisS19} works even in the unrelated-machines setting where the size vector of a job $j$ can depend on the machine that it is processed on.
Recently, Sai Sandeep \cite{Sandeep21} gave very strong inapproximability results for (offline) vector scheduling indicating that the current best results are almost optimal: assuming $\mfont{NP} \not \subseteq \mfont{ZPTIME}\bigl( n^{(\log n)^{O(1)}} \bigr)$, they rule out an $O((\log d)^{1-\eps})$-approximation for any $\eps > 0$, and under the weaker assumption of $\mfont{P} \neq \mfont{NP}$, they rule out a $\poly(\log \log d)$-approximation.
We also mention that factor $O(\log d/\log \log d)$ is the best possible competitive ratio for any (deterministic or randomized) online vector scheduling algorithm (see \cite{ImKKP19,AzarCP18}). 

Growing interest in \emph{optimization under uncertainty} has led to some recent work on stochastic load balancing under various norm objectives.
As mentioned before, Kleinberg et al. \cite{KleinbergRT00} were the first to investigate stochastic load balancing on identical machines for the makespan objective (i.e., \stochflb[$\ell_\infty$]) and gave an $O(1)$-approximation algorithm for the problem. 
This result was generalized to the unrelated-machines setting by Gupta, Kumar, Nagarajan and Shen \cite{GuptaKNS18,GuptaKNS21}, and subsequently, Molinaro \cite{Molinaro19} generalized the result to all $\ell_p$ norms.
Ibrahimpur and Swamy \cite{IbrahimpurS20,IbrahimpurS20b} were the first to consider \stochlb in its most general form (unrelated machines, arbitrary monotone symmetric norm objectives, and arbitrary job-size distributions) and gave the following approximation guarantees: (i) $O(1)$-approximation when jobs are weighted Bernoulli random variables; (ii) $O(1)$-approximation when $f$ is a $\topl$ norm; and (iii) $O(\log m/\log \log m)$-approximation for the most general setting.
With an eye towards obtaining small approximation factors, Goel and Indyk \cite{GoelI99} considered stochastic makespan minimization (on identical machines) when job-sizes follow a structured distribution. 
Among other results, they obtained a simple $2$-approximation when jobs are Poisson-distributed.
Very recently, there have been two works on improving the approximation factor for stochastic min-norm load balancing with Poisson jobs (\poislb). 
First, De, Khanna, Li and Nikpey \cite{DeKLN20} gave a PTAS for the makespan objective when machines are identical.
Second, Ibrahimpur and Swamy \cite{IbrahimpurS21} gave a $(2+\eps)$-approximation for the most general version of \poislb (arbitrary monotone symmetric norm objectives and unrelated-machines setting), and a PTAS when machines are identical.

\section{Vector Scheduling} \label{sec:vs}

Recall that in $d$-dimensional vector scheduling, we have a set $J$ of vector jobs and $m$ identical machines.
The size-vector of job $j$ is $p_j \in \Rp^d$.
To keep the notation simple, we reserve $i \in [m]$ to index the machine-set, $j \in J$ to index the job-set, and $r \in [d]$ to index the dimensions.
We seek an assignment $\sg : J \to [m]$ that minimizes the makespan $\max_{i,r} \bigl( \sum_{j : \sg(j) = i} p_{j,r} \bigr)$.
The natural lower bound $\LB$ on the optimal makespan is $\max\{\max_{j,r} p_{j,r}, \max_r \sum_j p_{j,r}/m \}$ (see \eqref{lbdefn}).
By scaling, we may assume without loss of generality that $\LB = 1$; this implies 
$p_{j,r} \in [0,1]$ for all $j,r$, and $\sum_j p_{j,r} \leq m$ for all $r$.

We prove our main result on vector scheduling (Theorem~\ref{thm:mainvs}) by proving a slightly stronger result.

\begin{theorem} \label{thm:logdapx}
Consider an instance of vector scheduling with $p_{j,r} \in [0,1]$ for all $j \in J,r \in [d]$, and $\sum_{j \in J} p_{j,r} \leq m \log d$ for all $r \in [d]$.
There is a randomized algorithm that produces an assignment $\sg$ whose makespan is $O(\log d)$.
The algorithm succeeds with probability at least $2/3$.
\end{theorem}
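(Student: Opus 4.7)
The plan is to follow the iterative sampling procedure outlined in the introduction, processing the $m$ machines one at a time. At the start of round $k$ we have $m_k := m - k + 1$ machines still available and a set $J_k$ of unassigned jobs, and we maintain the invariant $\sum_{j \in J_k} p_{j,r} \leq m_k \log d$ for every $r \in [d]$; this holds initially by the theorem's hypothesis. In round $k$ we sample a batch $S_k \subseteq J_k$ by independently including each $j \in J_k$ with probability $c/m_k$ for a sufficiently large constant $c$, assign all of $S_k$ to machine $k$, and recurse on $J_{k+1} := J_k \setminus S_k$ with $m_{k+1} := m_k - 1$ machines.

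The analysis rests on two Chernoff estimates. Since $\Eb{\sum_{j \in S_k} p_{j,r}} = (c/m_k)\sum_{j\in J_k} p_{j,r} \leq c \log d$ in every dimension and $p_{j,r} \in [0,1]$, the multiplicative Chernoff upper tail gives $\probb{\sum_{j \in S_k} p_{j,r} > C\log d} \leq d^{-c'}$ for constants $C = O(c)$ and $c' = \Omega(c)$, bounding machine $k$'s contribution to the makespan by $O(\log d)$ per dimension. To preserve the invariant, in any dimension $r$ where $\sum_{j \in J_k} p_{j,r} > (m_k-1)\log d$ we additionally need the batch to absorb the excess $\sum_{j \in J_k} p_{j,r} - (m_k-1)\log d \leq \log d$; in any such ``tight'' dimension the expected batch exceeds $c \log d/2$, so the Chernoff lower tail also gives $\probb{\sum_{j \in S_k} p_{j,r} < \log d} \leq d^{-c'}$. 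A union bound over the $d$ dimensions makes each round ``good'' (the upper bound in every dimension and the lower bound in every tight dimension) with probability at least $1 - 2/d^{c'-1}$, which can be pushed arbitrarily close to $1$ by enlarging $c$.

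To secure the claimed $2/3$ overall success probability, we resample in each round until the round is good, capping the number of attempts at $B = O\bigl(\max(1, \log m/\log d)\bigr)$; this drives each round's success probability above $1 - 1/(3m)$, and a final union bound over the $m$ rounds delivers overall success probability at least $2/3$. In the success event each machine $k$ receives load at most $C\log d = O(\log d)$ in every dimension, which is the claimed makespan. The one real technical hurdle is the lower-tail step: it is needed to preserve the invariant from round to round, and it works precisely because whenever the invariant is at risk in some dimension, the batch's mean in that dimension is already $\Omega(\log d)$, large enough for a standard multiplicative Chernoff lower tail to be effective. Degenerate situations --- an empty $J_k$, or a round with $|J_k| \leq m_k$ --- are handled trivially by placing one remaining job per machine, which contributes at most $1 \leq \log d$ per dimension.
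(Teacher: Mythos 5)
Your proof follows essentially the same approach as the paper's: iteratively sample a batch with probability $\Theta(1/m_k)$, use the Chernoff upper tail to bound the batch load by $O(\log d)$ and the lower tail in "tight" dimensions to preserve the invariant, and boost the per-round success probability to $1 - O(1/m)$ before union-bounding over the $m$ rounds. The paper draws $O(\log m)$ independent candidate batches per round rather than resampling, pins down explicit constants ($q = 7/m_k$, makespan $14 \log d$), assumes $d \geq 2$ up front (for $d=1$ the $1/\poly(d)$ tail bounds are vacuous), and stops the sampling at $m_k = 6$ to avoid a sampling probability exceeding $1$ — a base case you handle less directly (your ``$|J_k| \leq m_k$'' trigger is not quite the right condition; the real issue is $m_k \leq c$, though assigning everything to one machine then still gives load $\leq m_k \log d = O(\log d)$, so the argument survives) — but these are cosmetic differences, not a different route.
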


The only tool that we use in the proof of Theorem~\ref{thm:logdapx} is Chernoff tail bounds, which we state below.

\begin{lemma}[Chernoff Bounds] \label{chernoff}
Let $\{\scrv_j\}$ be a finite collection of independent $[0,1]$-bounded random variables, and let $\sumrv = \sum_j \scrv_j$ denote their sum.
Then,
\vspace{-1em}
\begin{enumerate}[(a)]
\item for any $\mu \geq \E{S}$ and $\delta \geq 0$, we have $\pr{S \geq (1+\delta) \mu} \leq \exp\bigl(-\frac{\delta^2 \mu}{2+\delta}\bigr)$. 
\item for any $\mu \leq \E{S}$ and $\delta \in [0,1]$, we have $\pr{S \leq (1-\delta)\mu} \leq \exp\bigl(-\frac{\delta^2 \mu}{2}\bigr)$.
\end{enumerate}
\end{lemma}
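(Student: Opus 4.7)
The plan is to prove both bounds via the standard Chernoff (exponential Markov) method, and then use one elementary analytic inequality to massage the resulting bound into the stated form. Throughout I will rely on the convexity bound $e^{tz}\leq 1-z+ze^t$ for every $z\in[0,1]$ and every real $t$; this holds because $z\mapsto e^{tz}$ is convex on $[0,1]$.

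Step one. Fix $t\in\R$ and write, by Markov applied to $e^{tS}$ (for part (a) I will take $t>0$ and for part (b) I will take $t<0$),
\[
\pr{S\geq(1+\dl)\mu}\leq e^{-t(1+\dl)\mu}\,\E{e^{tS}},\qquad
\pr{S\leq(1-\dl)\mu}\leq e^{-t(1-\dl)\mu}\,\E{e^{tS}}.
\]
By independence, $\E{e^{tS}}=\prod_j\E{e^{t\scrv_j}}$. Applying the convexity bound coordinate-wise gives $\E{e^{t\scrv_j}}\leq 1+\E{\scrv_j}(e^t-1)\leq\exp\!\bigl(\E{\scrv_j}(e^t-1)\bigr)$, so
\[
\E{e^{tS}}\leq \exp\!\bigl(\E{S}(e^t-1)\bigr).
\]
Step two (replacing $\E{S}$ by $\mu$). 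For part (a) I take $t\geq 0$, so $e^t-1\geq 0$; since $\mu\geq\E{S}$, this yields $\E{e^{tS}}\leq\exp(\mu(e^t-1))$. For part (b) I take $t\leq 0$, so $e^t-1\leq 0$; since $\mu\leq\E{S}$, again $\E{S}(e^t-1)\leq \mu(e^t-1)$ and the same MGF bound holds. In both cases I have reduced to a bound that only depends on $\mu$.

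Step three (optimize $t$). For part (a), minimizing $-t(1+\dl)\mu+\mu(e^t-1)$ over $t\geq 0$ gives $t=\ln(1+\dl)$ (nonnegative for $\dl\geq 0$), yielding
\[
\pr{S\geq(1+\dl)\mu}\leq \exp\!\bigl(\mu\,[\dl-(1+\dl)\ln(1+\dl)]\bigr).
\]
For part (b), minimizing $-t(1-\dl)\mu+\mu(e^t-1)$ over $t\leq 0$ gives $t=\ln(1-\dl)$ (nonpositive for $\dl\in[0,1]$), yielding
\[
\pr{S\leq(1-\dl)\mu}\leq \exp\!\bigl(-\mu\,[(1-\dl)\ln(1-\dl)+\dl]\bigr).
\]
Step four (analytic inequalities). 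To finish (a), I need $(1+\dl)\ln(1+\dl)-\dl\geq \dl^2/(2+\dl)$, equivalently $\ln(1+\dl)\geq 2\dl/(2+\dl)$ for $\dl\geq 0$. Letting $\phi(\dl)=\ln(1+\dl)-2\dl/(2+\dl)$, one checks $\phi(0)=0$ and $\phi'(\dl)=\dl^2\big/\bigl((1+\dl)(2+\dl)^2\bigr)\geq 0$, so $\phi\geq 0$. To finish (b), I need $(1-\dl)\ln(1-\dl)+\dl\geq \dl^2/2$ for $\dl\in[0,1)$. Let $\psi(\dl)=(1-\dl)\ln(1-\dl)+\dl-\dl^2/2$; then $\psi(0)=\psi'(0)=0$ and $\psi''(\dl)=\dl/(1-\dl)\geq 0$, so $\psi\geq 0$. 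The boundary case $\dl=1$ in (b) follows by continuity (the right-hand side becomes $e^{-\mu/2}$ and the left-hand side becomes $\pr{S\leq 0}\leq e^{-\mu}\leq e^{-\mu/2}$, directly from $\E{e^{-S}}\leq e^{-\mu}$).

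The only mild subtlety, which I want to flag as the main point to be careful about, is the monotonicity step replacing $\E{S}$ by $\mu$: the direction of the substitution is opposite in the two parts, and it works only because in (a) $t\geq 0$ so $e^t-1\geq 0$, while in (b) $t\leq 0$ so $e^t-1\leq 0$. Everything else is routine calculus.
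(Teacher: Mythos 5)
Your proof is correct. The paper states Lemma~\ref{chernoff} as a standard fact without proof (it is the textbook Chernoff bound, used as a black box in the proof of Theorem~\ref{thm:logdapx}), so there is no argument in the paper to compare against; your derivation is the standard exponential-moment method, and you correctly handle the one point that actually requires care here, namely that the statement allows $\mu\geq\E{\sumrv}$ in (a) and $\mu\leq\E{\sumrv}$ in (b), which you justify via the sign of $e^t-1$ for $t\geq 0$ versus $t\leq 0$. The calculus in Step four checks out: $\phi'(\dl)=\dl^2/\bigl((1+\dl)(2+\dl)^2\bigr)$ and $\psi''(\dl)=\dl/(1-\dl)$ are both correct. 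One tiny inaccuracy: in the $\dl=1$ boundary case of (b), the parenthetical claim $\E{e^{-\sumrv}}\leq e^{-\mu}$ does not follow from your MGF bound with $t=-1$ (that only gives $\E{e^{-\sumrv}}\leq e^{(e^{-1}-1)\mu}\approx e^{-0.63\mu}$); however, this does not matter, since your primary continuity argument (taking $\dl\to 1^-$ in the already-proved bound, using $\pr{\sumrv\leq 0}\leq\pr{\sumrv\leq(1-\dl)\mu}$) is valid and yields $\pr{\sumrv\leq 0}\leq e^{-\mu}\leq e^{-\mu/2}$ on its own.
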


As mentioned in Section~\ref{ourwork}, our algorithm for vector scheduling is based on finding a subset of jobs $S \subseteq J$ that can all be processed on a single machine by incurring a load of at most $O(\log d)$ in each dimension $r \in [d]$,  while ensuring that the residual problem is also a valid sub-instance of the problem with $m-1$ machines.
We formalize the argument below.

\begin{lemma} \label{lem:goodsubset}
Suppose $m \geq 7$ and $d \geq 2$. 
Consider a random job-set $S \subseteq J$ where job $j \in J$ is independently included in $S$ with probability $q := 7/m$.
With probability at least $1/2$, for all $r \in [d]$ we have $\sum_{j \in S} p_{j,r} \leq 14 \log d$ and $\sum_{j \in J \setminus S} p_{j,r} \leq (m-1) \log d$.
\end{lemma}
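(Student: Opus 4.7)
The plan is to bound, on a per-dimension basis, the probability that either of the two required inequalities fails, and then union-bound over the $d$ dimensions. Fix a dimension $r \in [d]$, set $a_r := \sum_{j \in J} p_{j,r} \leq m \log d$, and define $T_r := \sum_{j \in S} p_{j,r}$ and $Y_r := \sum_{j \in J \setminus S} p_{j,r} = a_r - T_r$. Because $p_{j,r} \in [0,1]$ and the events $\{j \in S\}_{j \in J}$ are mutually independent, $T_r$ is a sum of independent $[0,1]$-bounded random variables with $\E{T_r} = q a_r \leq 7 \log d$, so Lemma~\ref{chernoff} applies.

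The upper-tail event $\{T_r \geq 14 \log d\}$ can be handled directly by Lemma~\ref{chernoff}(a) with $\mu = 7 \log d$ and $\delta = 1$, giving $\pr{T_r \geq 14 \log d} \leq \exp(-7 \log d / 3) = d^{-\Omega(1)}$, which decays polynomially in $d$.

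The event $\{Y_r > (m-1) \log d\}$ is the delicate one. A direct Chernoff upper tail on $Y_r$ would use only a relative deviation $\delta = \Theta(1/m)$ from its mean $(1-q)a_r \leq (m-7)\log d$, yielding the useless bound $\exp(-\Omega(\log d / m))$ as $m$ grows. The trick is to rewrite the event as a lower-tail event for $T_r$: $Y_r > (m-1) \log d$ iff $T_r < \beta_r := a_r - (m-1)\log d$. If $a_r \leq (m-1)\log d$ then $\beta_r \leq 0$ and the event is impossible. Otherwise $\beta_r \in (0, \log d]$, and I would apply Lemma~\ref{chernoff}(b) to $T_r$ with $\mu = \E{T_r} = qa_r$ and $1 - \delta = \beta_r/\mu$. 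Using $q = 7/m$ and $a_r \leq m \log d$, a short computation gives
\[
\frac{\beta_r}{\mu} \;=\; \frac{1}{q}\left(1 - \frac{(m-1)\log d}{a_r}\right) \;\leq\; \frac{1}{q} \cdot \frac{1}{m} \;=\; \frac{1}{7},
\]
so $\delta \geq 6/7 \in [0,1]$ as required by Lemma~\ref{chernoff}(b). Combined with $\mu \geq q(m-1)\log d \geq 6 \log d$ (valid for $m \geq 7$), this yields $\pr{T_r \leq \beta_r} \leq \exp(-\delta^2 \mu / 2) = d^{-\Omega(1)}$.

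A union bound over the two event types and all $d$ dimensions gives a total failure probability that is $o(1)$ in $d$; the constants $q = 7/m$, $14\log d$, and $(m-1)\log d$ are chosen precisely so that this sum remains below $1/2$ for every $d \geq 2$ (if the constants in the theorem-statement Chernoff are too loose to clinch the $d=2$ case, one can either substitute the sharper form $(e^\delta/(1+\delta)^{1+\delta})^\mu$ or inflate $14$ slightly). The main obstacle is the second event: since the gap between $\E{Y_r}$ and the threshold $(m-1)\log d$ is only $\Theta(\log d)$ while $\E{Y_r}$ itself can be of order $m\log d$, a direct Chernoff on $Y_r$ cannot extract any useful deviation bound; transferring to the lower tail of $T_r$---a sum whose mean is only $\Theta(\log d)$---restores a constant multiplicative deviation and hence a polynomial-in-$d$ tail bound.
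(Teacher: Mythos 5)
Your proof is correct and takes essentially the same route as the paper's: bound the upper tail of $T_r=\sum_{j\in S}p_{j,r}$ directly via Chernoff, and convert the event $\sum_{j\in J\setminus S}p_{j,r}>(m-1)\log d$ into a lower-tail event for $T_r$, crucially using the hypothesis $\sum_j p_{j,r}\leq m\log d$. The only cosmetic difference is that the paper applies the lower-tail bound at the uniform threshold $\log d$ (restricting attention to dimensions $r$ with $\sum_j p_{j,r}>(m-1)\log d$), whereas you use the $r$-dependent threshold $\beta_r=a_r-(m-1)\log d\leq\log d$; the resulting Chernoff parameters ($\delta$ near $1$, $\mu\geq 6\log d$) and per-dimension tail bounds are essentially identical.
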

\begin{proof}
The proof is based on a straightforward application of Chernoff bounds. 
By our choice of $q$, for any $r$ we have $\E{\sum_{j \in S} p_{j,r}} \leq 7 \log d$.
Using Lemma~\ref{chernoff}(a) with $\delta = 1$, for any $r$ we have:
\[
\probb{ \sum_{j \in S} p_{j,r} \geq 14 \log d } \leq \exp\bigl(\frac{-7\log d}{3}\bigr) \leq \frac{1}{d^2}. 
\]

Next, we use Chernoff bounds for the lower tail to prove the remaining size-bound on jobs in $J \setminus S$.
Let $B := \{ r \in [d] : \sum_{j \in J} p_{j,r} > (m-1) \log d \}$.
Since $m \geq 7$, for any $r \in B$ we have $\E{\sum_{j \in S} p_{j,r}} \geq (1-1/m) \cdot 7 \log d \geq 6 \log d$.
Using Lemma~\ref{chernoff}(b) with $\delta = 5/6$ we get:
\[
\pr{ \sum_{j \in S} p_{j,r} \leq \log d } \leq \exp\bigl(-\frac{25}{36}\cdot\frac{6\log d}{2}\bigr) \leq \frac{1}{d^2}.
\]

By union-bound, with probability at least $1/2$, we have $\sum_{j \in S} p_{j,r} \leq 14 \log d$ for all $r \in [d]$, and $\sum_{j \in S} p_{j,r} \geq \log d$ for all $r \in B$.
The desired bounds follow.
\end{proof}

\begin{proofof}{Theorem~\ref{thm:logdapx}}
We call an instance of vector scheduling as a \emph{valid instance with $m$ machines} if $p_{j,r} \in [0,1]$ for all $j,r$, and $\sum_j p_{j,r} \leq m \log d$ for all $r$.
Note that in Theorem~\ref{thm:logdapx} we are given a valid instance with $m$ machines.
We will assume that $d \geq 2$, since otherwise a greedy list-scheduling algorithm gives
an easy $2$-approximation. 

Let $N := \ceil{\log_2 (3m)} = O(\log m)$.
We now describe a randomized algorithm for vector scheduling.
The overall procedure has at most $m$ iterations, and if the algorithm is successful, it produces an assignment with makespan $O(\log d)$.
Consider the start of an iteration $t$ for some $t \geq 1$. 
We have a valid instance with $m-t+1$ identical machines.
If $m-t+1 = 6$, then we simply assign all remaining jobs to a single machine; note that $\sum_j p_{j,r} \leq 6 \log d$ for all dimensions $r$, so we are not assigning too much load to this machine in any dimension.
Otherwise, $m-t+1 \geq 7$.
Consider random subsets $S_1,\dots,S_N$ where each $S_\ell \subseteq J, \ell \in [N]$ is obtained by independently including job $j$ in $S_\ell$ with probability $q := 7/m$.
By Lemma~\ref{lem:goodsubset}, with probability at most $2^{-N}\leq\frac{1}{3m}$,
none of the $S_\ell$ sets satisfy the conclusion of the lemma; in this case we terminate
the algorithm with a failure. 
Otherwise, for some $\ell \in [N]$, we have for all $r$, $\sum_{j \in S_\ell} p_{j,r} \leq 14 \log d$ and $\sum_{j \in J \setminus S_\ell} p_{j,r} \leq (m-t) \log d$.
In this case, we assign all jobs in $S_\ell$ to one machine and end the iteration with a valid instance with $m-t$ machines and residual job-set $J \setminus S_\ell$.
The probability that the algorithm terminates with a failure in any given iteration is at most $1/3m$, so the overall failure probability of the algorithm is at most $1/3$.
In other words, with probability at least $2/3$, we obtain an assignment whose makespan is
at most $14 \log d$.
\end{proofof}

We remark that if we assume $\sum_j p_{j,r} \leq m\cdot U$ for all dimensions $r$ in the
statement of Theorem~\ref{thm:logdapx} (where $U$ does not depend on $m$), then we can
easily adapt the above arguments to obtain makespan $O(\max(\log d,U))$.
Suppose $U>\log d$. 
Sampling jobs as before with probability $7/m$ now ensures that the random subset
$S$ satisfies 
$\probb{\sum_{j \in S} p_{j,r} \leq 14 U,\ \sum_{j\in J\setminus S}p_{j,r}\leq(m-1)U}\geq 1-\frac{2}{d^2}$, 
for $m\geq 7$. 
Given this, we proceed exactly as in Theorem~\ref{thm:logdapx}, where a valid instance
with $m$ machines now means that the total load in each dimension $r$ is at most $m\cdot U$.

\section{\boldmath Stochastic Minimum-Norm Load Balancing} \label{sec:stochlb}

We now utilize our results on vector scheduling to derive approximation algorithms for {\em stochastic minimum norm load balancing} ($\stochlb$), crucially leveraging the fact that our approximation guarantee for vector scheduling is with respect to the natural lower bound $\LB$ (see \eqref{lbdefn}).

Recall that in \stochlb we have a set $J$ of $n$ independent stochastic jobs and $m$ identical machines.
The processing time of a job $j$ is a nonnegative random variable $X_j$ whose distribution is given to us. (We will only however need a certain kind of access to these distributions.) 
We are also given a monotone symmetric norm $f : \R^m \to \Rp$.
Our goal in \stochlb is to find an assignment $\sg$ that minimizes $\E{f(\lvecsg)}$, where $\lvecsg_i := \sum_{j : \sg(j) = i} X_j$ for $i \in [m]$.
Also recall that we use \stochflb[\msnorm] to explicitly indicate the norm $\msnorm$ that we are considering in an instance of \stochlb.

In this section we prove Theorem~\ref{thm:simulapx} by giving a simultaneous $O(\log \log m)$-approximation for all \stochflb[\msnorm] instances where $\msnorm$ is a monotone, symmetric norm.
Theorem~\ref{thm:stochlbapx} immediately follows from Theorem~\ref{thm:simulapx}.
The proof of Theorem~\ref{thm:simulapx} is based on a reduction to vector
scheduling with $O(\log m)$-dimensional vectors, by utilizing suitable tools from prior work~\cite{KleinbergRT00,IbrahimpurS20}. 
We first discuss these relevant tools, and show in Section~\ref{subsec:reduction} how they lead to the proof of Theorem~\ref{thm:simulapx}.
We first introduce some notation.
For a nonnegative random variable $Z$ and a scalar $\scal$, we define the \emph{truncated random variable} $\trv{Z}{\scal} := Z \cdot \boldone_{Z < \scal}$ as the random variable that takes size $Z$ when the event $\{Z < \scal\}$ happens, and size $0$ otherwise.
We also define the \emph{exceptional random variable} $\erv{Z}{\scal} := Z \cdot \boldone_{Z \geq \scal}$ as the random variable that takes size $Z$ when the event $\{Z \geq \scal\}$ happens, and size $0$ otherwise.
Note that $Z = \trv{Z}{\scal} + \erv{Z}{\scal}$.
We reserve $\msnorm : \R^m \to \Rp$ to denote an arbitrary monotone symmetric norm and $Y = (Y_1,\dots,Y_m)$ to denote an arbitrary $m$-dimensional random vector whose coordinates $\{Y_i\}_{i \in [m]}$ form an independent collection of nonnegative random variables.
For brevity, we will say that $Y$ follows a product distribution on $\Rp^m$.
Observe that load vectors $\lvecsg$ that arise in \stochlb instances with $m$ machines follow a product distribution on $\Rp^m$.

Let $\POS=\POS(m):= \{1,2,4,\dots,2^{\floor{\log_2 m}}\}$. 
The first tool that we utilize shows importantly that we can control $\E{\msnorm(Y)}$ by controlling a collection of simpler norms called $\topl$ norms (Theorem~\ref{thm:stochmaj}).

\begin{definition}[$\topl$ norm] \label{def:topl}
For any $\ell \in [m]$, the {\em $\topl$ norm} is defined as follows: for $x \in \Rp^m$, $\Topl{x}$ is the sum of the $\ell$ largest coordinates of $x$, i.e., $\Topl{x} = \sum_{i=1}^{\ell} x^\da_i$, where $x^\da$ denotes the vector $x$ with its coordinates sorted in non-increasing order.
\end{definition}

Like $\ell_p$ norms, $\topl$ norms give us a way to interpolate between the $\mathrm{max}$-norm ($\topl[1]$) and the $\mathrm{sum}$-norm ($\topl[m]$).
A well-known mathematical result called the \emph{majorization inequality}~\cite{InequalitiesHLP} (see also~\cite{ChakrabartyS19a}) shows that controlling all $\topl$ norms gives us a handle on all monotone symmetric norms: for $x,y \in \Rp^m$, if $\Topl{x} \leq \Topl{y}$ for all $\ell \in [m]$, then $\msnorm(x) \leq \msnorm(y)$ for every monotone, symmetric norm $\msnorm$.
Recently, in~\cite{IbrahimpurS20}, we proved a substantial stochastic generalization
(qualitatively speaking) of this result.

\begin{theorem}[Approximate Stochastic Majorization, Theorem 5.2 in~\cite{IbrahimpurS20b}] \label{thm:stochmaj}
Let $Y = (Y_1,\dots,Y_m)$ and $W = (W_1,\dots,W_m)$ follow product distributions on $\Rp^m$, and let $\al$ be a positive scalar.
\begin{enumerate}[(i)]
\item If $\E{\Topl{Y}} \leq \al \E{\Topl{W}}$ for all $\ell \in [m]$, then $\E{\msnorm(Y)} \leq 28 \cdot \al \E{\msnorm(W)}$.
\item If $\E{\Topl{Y}} \leq \al \E{\Topl{W}}$ for all $\ell \in \POS(m)$, then \newline ${\E{\msnorm(Y)} \leq 56 \cdot \al \E{\msnorm(W)}}$.
\end{enumerate} 
\end{theorem}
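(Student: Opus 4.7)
The plan is to reduce the stochastic majorization inequality to the classical (deterministic) majorization inequality for monotone symmetric norms, by passing through suitable deterministic surrogate vectors. Given the random vector $Y$ with independent nonnegative coordinates, define the surrogate $\hat Y \in \Rp^m$ by $\hat Y_\ell := \E{\Topl{Y}} - \E{\Topl[\ell-1]{Y}}$ for each $\ell \in [m]$, with $\Topl[0]{\cdot} \equiv 0$. For any deterministic realization $y$, the first difference $\Topl{y} - \Topl[\ell-1]{y} = y^\da_\ell$ is nonincreasing in $\ell$, so by linearity of expectation $\hat Y$ has nonincreasing coordinates, and by telescoping $\Topl{\hat Y} = \E{\Topl{Y}}$ for every $\ell$. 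Defining $\hat W$ analogously, the hypothesis of part (i) rewrites as $\Topl{\hat Y} \leq \al \cdot \Topl{\hat W}$ for all $\ell \in [m]$, and the classical majorization inequality applied to the sorted nonnegative vectors $\hat Y$ and $\al\hat W$ yields $\msnorm(\hat Y) \leq \al \cdot \msnorm(\hat W)$ for every monotone symmetric norm $\msnorm$.

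It remains to establish two ``bridge'' bounds connecting the stochastic and deterministic worlds: (a) $\E{\msnorm(Y)} \leq C_1 \cdot \msnorm(\hat Y)$, and (b) $\msnorm(\hat W) \leq C_2 \cdot \E{\msnorm(W)}$, with $C_1 C_2 \leq 28$. Bound (b) is essentially immediate from the Lorentz-type representation of a monotone symmetric norm as $\msnorm(x) = \sup_{\ld \in \Ld_\msnorm} \sum_\ell \ld_\ell \Topl{x}$ over some set $\Ld_\msnorm$ of nonnegative weight vectors: one computes
\[
\msnorm(\hat W) \,=\, \sup_\ld \sum_\ell \ld_\ell \E{\Topl{W}} \,=\, \sup_\ld \E{\textstyle\sum_\ell \ld_\ell \Topl{W}} \,\leq\, \E{\sup_\ld \textstyle\sum_\ell \ld_\ell \Topl{W}} \,=\, \E{\msnorm(W)},
\]
giving $C_2 = 1$.

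The main obstacle is bound (a), which requires passing a supremum outside an expectation---precisely the direction that failed for (b). I would tackle this via a truncation-and-concentration scheme. For each dyadic $\ell \in \POS$, set a threshold $\tau_\ell := \E{\Topl{Y}}/\ell$ and split each coordinate as $Y_i = \trv{Y_i}{\tau_\ell} + \erv{Y_i}{\tau_\ell}$. The truncated pieces are independent and $[0,\tau_\ell]$-bounded, so Chernoff bounds (Lemma~\ref{chernoff}) control their tail behaviour, and a union bound over the $O(\log m)$ dyadic indices yields simultaneous concentration of the truncated $\Topl{\cdot}$ contributions to a constant multiple of $\E{\Topl{Y}}$. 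The exceptional pieces are bounded by the total mass of coordinates exceeding $\tau_\ell$, whose expected total is at most $\E{\Topl{Y}}$ by Markov. Combining these deterministic-looking controls and plugging them into the Lorentz representation of $\msnorm$ gives (a) with $C_1 = 28$.

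Part (ii) then follows by a dyadic interpolation: using that $\Topl{x}/\ell$ is nonincreasing in $\ell$ for sorted nonnegative $x$ (the average of the top $\ell$ entries only decreases as more smaller entries are added), for any $\ell \in (2^k, 2^{k+1}]$ we chain
\[
\Topl{\hat Y} \,\leq\, \Topl[2^{k+1}]{\hat Y} \,\leq\, \al \cdot \Topl[2^{k+1}]{\hat W} \,\leq\, 2\al \cdot \Topl{\hat W},
\]
which reduces to the setting of part (i) with $\al$ replaced by $2\al$, producing the claimed factor of $56\al$.
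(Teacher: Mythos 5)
This theorem is not proved in the paper: it is imported verbatim from Theorem~5.2 of the authors' prior work~\cite{IbrahimpurS20b}, so there is no "paper's own proof" to compare against. Evaluating your proposal on its own merits: the scaffolding is sound. The surrogate vector $\hat Y$ with $\Topl{\hat Y} = \E{\Topl{Y}}$, the Jensen-style bound (b) giving $\msnorm(\hat W) \leq \E{\msnorm(W)}$, the reduction of part~(ii) to part~(i) via the nonincreasing-average property of $\Topl{x}/\ell$---these are all correct and would plausibly appear in any proof. (One small slip in~(ii): for $\ell$ in the topmost dyadic block $(2^{\floor{\log_2 m}}, m]$ you cannot round $\ell$ \emph{up} to a power of two in $\POS$; you must instead round \emph{down} and use $\Topl{\hat Y}\leq 2\,\Topl[L]{\hat Y}$ with $L=2^{\floor{\log_2 m}}$, which is an easy fix.)

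The genuine gap is bridge~(a), $\E{\msnorm(Y)} \leq 28\cdot\msnorm(\hat Y)$, which is precisely the substantive content of the theorem, and your argument for it is a sketch that does not go through as written. First, Lemma~\ref{chernoff} (Chernoff) controls sums of independent bounded random variables, but $\Topl{Y}$ is not a sum of the $Y_i$---to reduce it to one you need the variational identity $\Topl{y}=\min_\scal\bigl(\ell\scal+\sum_i(y_i-\scal)_+\bigr)$, which you never invoke, and even then the summands $(Y_i-\scal)_+$ are not bounded, so Chernoff does not apply directly. Second, the crucial step is interchanging the supremum over Lorentz weights with the expectation---i.e., passing from $\E{\sup_\ld\sum_\ell\ld_\ell\Topl{Y}}$ to $\sup_\ld\sum_\ell\ld_\ell\E{\Topl{Y}}$---and this is exactly where you write ``combining these deterministic-looking controls and plug them into the Lorentz representation'' without any argument. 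That swap is the whole difficulty; concentration of each $\Topl{Y}$ around its mean, taken one $\ell$ at a time, does not by itself give you control of a supremum over an $m$-dimensional family of weight vectors, and a union bound over $O(\log m)$ thresholds would leak a $\log\log m$ factor rather than a constant. Third, the constant $28$ is simply asserted. In short, the reduction to~(a) is a reasonable way to organize the proof, but~(a) is where all the work lies and your proposal does not carry it out.
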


By Theorem~\ref{thm:stochmaj}(ii), the task of proving Theorem~\ref{thm:simulapx} reduces to proving the following.

\begin{theorem} \label{thm:simultoplapx}
Consider $n$ stochastic jobs $\{X_j\}_{j \in J}$ and $m$ identical machines.
With probability at least $2/3$, we can obtain an assignment $\sg : J \to [m]$ that is simultaneously an $O(\log \log m)$-approximation to all \stochflb[\topl] instances with $\ell \in \POS$.
\end{theorem}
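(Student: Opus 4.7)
The plan is to encode the simultaneous problem as a single $d$-dimensional vector scheduling instance with $d=|\POS|=O(\log m)$, and then invoke Theorem~\ref{thm:logdapx}. For each $\ell\in\POS$, let $\OPT_\ell$ denote the optimum of \stochflb[\topl]; by enumerating over a polynomial-size grid of guesses (a standard trick that loses only a constant factor), I may treat $\OPT_\ell$ as known. Set $\scal_\ell:=\OPT_\ell/\ell$ (the ``typical per-machine top-$\ell$ contribution'') and, following the Kleinberg--Rabani--Tardos and \cite{IbrahimpurS20} recipe, associate to each job $j$ a scalar effective size $s_{j,\ell}$ that is a function of the truncated variable $\trv{X_j}{\scal_\ell}$ and the exceptional probability $\pr{X_j\geq\scal_\ell}$ (so that $s_{j,\ell}\leq\scal_\ell$ up to a constant, deterministically). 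I then form the vector scheduling instance on the same jobs and machines with size vector $p_j$ whose $\ell$-th coordinate is $s_{j,\ell}/\scal_\ell$.

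The next step is to verify that the $\LB$ for this vector scheduling instance is $O(1)$. In each coordinate $\ell$, the first term of $\LB$ (the largest single entry) is $O(1)$ because $s_{j,\ell}\leq O(\scal_\ell)$ by construction. The second term (average total load) is $O(1)$ because $\sum_j s_{j,\ell}\leq O(m\scal_\ell)$: indeed, $\ell$ times the average per-machine effective size is at most a constant times $\OPT_\ell$ in any instance (this is the easy direction of the identical-machines equivalence between \stochflb[\topl] and deterministic scheduling with sizes $s_{j,\ell}$). Applying Theorem~\ref{thm:logdapx} with $d=|\POS|=O(\log m)$ then yields, in one call, an assignment $\sg$ satisfying
\begin{equation*}
\sum_{j:\sg(j)=i} s_{j,\ell}\;\leq\;O(\log\log m)\cdot\scal_\ell\qquad\text{for every machine $i\in[m]$ and every $\ell\in\POS$.}
\end{equation*}

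Finally, I translate this scalar-makespan guarantee back into a bound on $\E{\Topl{\lvecsg}}$ for each $\ell\in\POS$. Decompose each $X_j$ as $\trv{X_j}{\scal_\ell}+\erv{X_j}{\scal_\ell}$. For the truncated pieces, the per-machine sum $\sum_{j:\sg(j)=i}\E{\trv{X_j}{\scal_\ell}}=O(\log\log m)\scal_\ell$ together with Chernoff concentration on independent $[0,\scal_\ell]$-bounded summands implies that the sum of the top-$\ell$ per-machine loads contributed by truncated parts is $O(\log\log m)\cdot\ell\scal_\ell=O(\log\log m)\cdot\OPT_\ell$. For the exceptional pieces, the standard bound $\ell\cdot\E{\erv{X_j}{\scal_\ell}}\leq O(\OPT_\ell)$ (summed in expectation over $j$, and using that an exceptional event on job $j$ contributes at most to one machine's load, which enters the top $\ell$ trivially) also gives $O(\OPT_\ell)$. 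Combining, $\E{\Topl{\lvecsg}}=O(\log\log m)\cdot\OPT_\ell$ simultaneously for all $\ell\in\POS$, which is exactly Theorem~\ref{thm:simultoplapx}; Theorem~\ref{thm:simulapx} then follows by Theorem~\ref{thm:stochmaj}(ii).

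The main obstacle is the last step: one needs the right ``effective size'' function $s_{j,\ell}$ for which the identical-machines \stochflb[\topl] objective is sandwiched, within constant factors, between $\max_i\sum_{j:\sg(j)=i}s_{j,\ell}$ and the same quantity up to an additive $O(\scal_\ell)$, and for which the deterministic bound $s_{j,\ell}\leq O(\scal_\ell)$ holds. The former is what makes an $\LB$-relative vector-scheduling guarantee usable here: only the $\LB$-relative statement couples the $O(\log d)$ overhead to the correct per-dimension normalization $\scal_\ell$, so that an overhead in dimension $\ell$ turns into an overhead on $\OPT_\ell$ rather than on some crude global scale. Given the \stochflb[\topl] structural lemmas already developed in \cite{IbrahimpurS20} for identical machines, both properties of $s_{j,\ell}$ can be read off directly, and the proof assembles cleanly from the three steps above.
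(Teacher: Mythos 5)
Your high-level plan matches the paper's: reduce the simultaneous problem to a single $|\POS|$-dimensional vector-scheduling instance with coordinates given by effective sizes of truncated job variables at per-$\ell$ scales, verify that $\LB=O(1)$ for that instance, and then exploit that the $\LB$-relative $O(\log d)$ guarantee of Theorem~\ref{thm:logdapx} scales correctly in each coordinate $\ell$ against $\OPT_\ell$. This is the right idea, and your emphasis on why the $\LB$-relative form of the guarantee is essential is exactly the crux.

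However, your translation step for the truncated sub-instance has a real gap. You bound the per-machine expected truncated load by $O(\log\log m)\cdot\scal_\ell$ and then invoke ``Chernoff concentration on independent $[0,\scal_\ell]$-bounded summands'' to deduce $\E{\Topl{\cdot}}=O(\log\log m)\cdot\ell\scal_\ell$. This does not work for small $\ell$. Take $\ell=1$: with $m$ machines each having expected load $\mu=O(\log\log m)$ (in units of $\scal_\ell$), a Chernoff tail plus union bound gives $\max_i Y_i = O(\log m)$ with high probability, not $O(\log\log m)$; to push the failure probability below $1/m$ per machine you must inflate the threshold by roughly a factor of $\log m/\mu$. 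So plain Chernoff control of the per-machine expected load loses the $\log\log m$ guarantee entirely for the max norm. What actually makes the argument go through is the effective-size machinery with the $\ell$-calibrated base $\ld_\ell=\floor{2m/\ell}$: bounding the per-machine $\ld_\ell$-effective size (which is what your vector-scheduling makespan actually controls, since the coordinates are $\eff[\ld_\ell]{\cdot}$) feeds into Lemma~\ref{lem:smallbeta} to give exceptional-mean bounds with the $1/\ld_\ell$ decay needed to sum over $m$ machines and stay at $O(\ell)$, and Theorem~\ref{thm:exptopl} then converts that into an $\E{\Topl{\cdot}}$ bound. This is exactly Lemma~\ref{lem:trunctopl}(i), and you should invoke it directly in place of the Chernoff argument. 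A secondary, less serious issue: the paper avoids guessing $\OPT_\ell$ altogether by binary-searching each threshold $t_\ell$ on the two directly testable conditions $\sum_j\E{\erv{X_j}{\scal}}\leq\ell\scal$ and $\sum_j\eff[\ld_\ell]{\trv{X_j}{\scal}/4\scal}\leq 8m$; your ``enumerate a polynomial grid of $\OPT_\ell$ guesses'' is fuzzier, since you would also need a way to evaluate or certify each guess, which you do not spell out.
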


We prove Theorem~\ref{thm:simultoplapx} using an approximation algorithm for
$|\POS|$-dimensional vector scheduling {\em with respect to the natural lower bound $\LB$}. 
Since $|\POS| = O(\log m)$, and Section~\ref{sec:vs} presents such an 
$O(\log d)$-approximation algorithm for $d$-dimensional vector scheduling, we obtain the desired result. 

\begin{subsection}{\boldmath Handling a single $\topl$ norm} \label{subsec:exptopl}

We next discuss how to tackle stochastic load balancing with a single $\topl$ norm. 
Combining the ingredients here for all $\ell \in \POS$ will yield our reduction to $|\POS|$-dimensional vector scheduling (see Section~\ref{subsec:reduction}).

We fix $\ell \in\POS$ throughout this subsection.
In~\cite{IbrahimpurS20}, the authors give an $O(1)$-approximation for \stochflb[\topl], even in the more-general setting of unrelated machines via an LP-based algorithm, but for identical machines, the underlying arguments can be substantially simplified.
In particular, \cite{IbrahimpurS20} show that, $\E{\Topl{Y}}$ can be approximated via a separable expression (in the $Y_i$s) when $Y$ follows a product distribution
(Theorem~\ref{thm:exptopl}), and show how one can deal with this proxy expression by
generalizing certain arguments from~\cite{KleinbergRT00}. 
Mimicking the proof strategy in~\cite{KleinbergRT00}, which gives an $O(1)$-approximation for \stochflb[{\topl[1]}], then shows that \stochflb[\topl] can
be reduced to (scalar) makespan minimization (i.e., $1$-dimensional vector scheduling).
The following technical theorem gives a handle on $\E{\Topl{Y}}$ when $Y$ follows a product distribution on $\Rp^m$.

\begin{theorem} 
\label{thm:exptopl}
Let $\scal$ be a positive scalar.
\begin{enumerate}[(i)]
\item \textnormal{(Lemma 4.1 in~\cite{IbrahimpurS20b})}\ \ 
If $\sum_{i \in [m]} \E{\erv{Y_i}{\scal}} \leq \ell \scal$, then $\E{\Topl{Y}} \leq 2 \ell \scal$.
\item \textnormal{(Lemma 4.2 in~\cite{IbrahimpurS20b})}\ \ 
If $\sum_{i \in [m]} \E{\erv{Y_i}{\scal}} > \ell \scal$, then $\E{\Topl{Y}} > \ell \scal/2$.
\end{enumerate}
\end{theorem}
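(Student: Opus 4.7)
My plan is to prove the two parts using different strategies. For part (i), the upper bound follows from the triangle inequality for the $\topl$-norm (it is subadditive, being a norm). Decomposing $Y_i = \trv{Y_i}{\scal} + \erv{Y_i}{\scal}$ pointwise, subadditivity yields
\[
\Topl{Y} \;\leq\; \Toplb{(\trv{Y_i}{\scal})_i} + \Toplb{(\erv{Y_i}{\scal})_i}.
\]
Each $\trv{Y_i}{\scal} < \scal$, so the first term is bounded deterministically by $\ell\scal$. Since $(\erv{Y_i}{\scal})_i$ is nonnegative, the top-$\ell$ sum is at most the total sum, giving $\Toplb{(\erv{Y_i}{\scal})_i} \leq \sum_i \erv{Y_i}{\scal}$; its expectation is at most $\ell\scal$ by hypothesis. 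Adding yields $\E{\Topl{Y}} \leq 2\ell\scal$.

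For part (ii), my plan is to use the layer-cake identity
\[
\Topl{Y} \;=\; \int_0^\infty \min(K(t),\ell)\,dt,\qquad K(t) := \bigl|\{i \in [m] : Y_i \geq t\}\bigr|,
\]
which follows from writing each order statistic $Y^\da_k = \int_0^\infty \boldone_{K(t) \geq k}\,dt$ and summing over $k \in [\ell]$. Taking expectations (via Fubini) gives $\E{\Topl{Y}} = \int_0^\infty \E{\min(K(t),\ell)}\,dt$. Because the $Y_i$ are independent, each $K(t)$ is a sum of independent Bernoullis with mean $P(t) := \sum_i \Pr[Y_i \geq t]$, to which I would apply the Bernoulli-sum concentration lower bound
\[
\E{\min(K(t),\ell)} \;\geq\; \bigl(1 - 1/\euler\bigr)\,\min(P(t),\ell),
\]
whose extremal case is $K(t) \sim \mathrm{Poisson}(1)$ with $\ell = 1$. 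This is the main nontrivial ingredient of the plan; it can be proved by a Poissonization argument together with direct computation ($\Pr[K(t)=0] \leq \exp(-P(t))$ for the lower tail, and a standard Chernoff bound when $P(t)$ exceeds $\ell$).

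It remains to show that $\int_0^\infty \min(P(t),\ell)\,dt \geq \ell\scal$. The identity $\mu_i = \scal\Pr[Y_i \geq \scal] + \int_\scal^\infty \Pr[Y_i \geq t]\,dt$, summed over $i$, yields $\scal P(\scal) + \int_\scal^\infty P(t)\,dt = \sum_i \mu_i > \ell\scal$. I split on the value of $P(\scal)$. If $P(\scal) \geq \ell$, then $P(t) \geq \ell$ on $[0,\scal]$ by monotonicity of $P$, so already $\int_0^\scal \min(P(t),\ell)\,dt = \ell\scal$. Otherwise $P(\scal) < \ell$, in which case $\min(P(t),\ell) = P(t)$ on $[\scal,\infty)$ and $\min(P(t),\ell) \geq P(\scal)$ on $[0,\scal]$, so $\int_0^\infty \min(P(t),\ell)\,dt \geq \scal P(\scal) + \int_\scal^\infty P(t)\,dt > \ell\scal$. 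In either case, multiplying the integral bound by the factor $1 - 1/\euler > 1/2$ delivers $\E{\Topl{Y}} > \ell\scal/2$, as required.
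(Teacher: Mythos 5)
The paper itself does not prove this theorem; it imports it from \cite{IbrahimpurS20b} as Lemmas 4.1 and 4.2, so there is no in-paper argument to compare against. Judging your proof on its own: part~(i) is correct and clean (subadditivity of $\topl$, deterministic bound $\ell\scal$ on the truncated part, pass to expectation on the exceptional part). In part~(ii), the layer-cake identity $\Topl{Y}=\int_0^\infty\min(K(t),\ell)\,dt$, the Fubini step, and the integral bound $\int_0^\infty\min(P(t),\ell)\,dt\geq\ell\scal$ (including the case split on whether $P(\scal)\geq\ell$, using monotonicity of $P$) are all correct.

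The gap is in the concentration inequality $\E{\min(K,\ell)}\geq(1-1/\euler)\min(P,\ell)$ for a sum $K$ of independent Bernoullis with mean $P$, which you rightly flag as the main nontrivial step but whose proof sketch would not go through as written. The bound $\Pr[K=0]\leq\exp(-P)$ only controls $\Pr[K\geq 1]$, i.e., handles $\ell=1$; and ``a standard Chernoff bound when $P$ exceeds $\ell$'' is vacuous precisely in the extremal regime $P\approx\ell$ (Chernoff gives nothing at the mean). Concretely, at $\ell=2$, $K\sim\mathrm{Poisson}(2)$, one has $\ell\Pr[K\geq\ell]=2(1-3\euler^{-2})\approx 1.19<(1-1/\euler)\cdot 2\approx 1.26$, so bounding $\E{\min(K,\ell)}$ by $\ell\Pr[K\geq\ell]$ alone is too lossy; you must use the full sum $\sum_{k=1}^\ell\Pr[K\geq k]$. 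The inequality is true; two ways to actually prove it are: (a) it is the correlation-gap bound for the monotone submodular function $S\mapsto\min(|S|,\ell)$; or (b) reduce to Poisson via the convex stochastic order $\mathrm{Ber}(p)\preceq_{\mathrm{cx}}\mathrm{Poisson}(p)$ (so Poisson minimizes the expectation of the concave $\min(\cdot,\ell)$), then note $\mu\mapsto\E{\min(\mathrm{Poisson}(\mu),\ell)}$ is concave with derivative $\Pr[\mathrm{Poisson}(\mu)\leq\ell-1]$, so the ratio to $\min(\mu,\ell)$ is minimized at $\mu=\ell$, where $\E{(\ell-\mathrm{Poisson}(\ell))^+}=\ell\Pr[\mathrm{Poisson}(\ell)=\ell]=\ell\cdot\ell^\ell\euler^{-\ell}/\ell!\leq\ell/\euler$ by Stirling (equality at $\ell=1$). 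You should replace the Chernoff appeal with one of these; without it, the constant you can extract from your stated tools falls short of the $>1/2$ needed for the conclusion.
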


The upshot of Theorem~\ref{thm:exptopl} is that if $\scal$ is such that $\sum_{i \in [m]} \E{\erv{Y_i}{\scal}}$ is roughly $\ell \scal$, then the {\em separable expression} $\sum_{i \in [m]} \E{\erv{Y_i}{\scal}}$ is a constant-factor proxy for 
$\E{\Topl{Y}}$.
In \stochlb, we have $Y_i=\lvecsg_i$ (where $\sg:J\mapsto[m]$ is the job-assignment),
which is a sum of independent random variables, and we next need to address how to obtain
a handle on $\E{\erv{Y_i}{\scal}}$. 
Similar to \cite{KleinbergRT00,GuptaKNS21,IbrahimpurS20}, given the ``scale'' $\scal$, we
can decompose any instance of \stochflb[\topl] into an ``easy'' sub-instance, for which
it is trivial to bound $\E{\erv{Y_i}{\scal}}$ and any assignment yields an
$O(1)$-approximation, and a (more) ``difficult'' sub-instance, where we need some 
technical results to control $\E{\erv{Y_i}{\scal}}$ and obtain an $O(1)$-approximation.
Lemmas~\ref{lem:exceptopl} and~\ref{lem:trunctopl} indicate how one may define these two subinstances.  

\begin{lemma} \label{lem:exceptopl}
Let $\scal$ be a positive scalar.
Consider an instance of \stochflb[\topl] where for each job $j$, the distribution of $X_j$ is supported on $\{0\} \cup [\scal,\infty)$ i.e., $\pr{0 < X_j < \scal} = 0$.
\begin{enumerate}[(i)]
\item If $\sum_{j \in J} \E{X_j} \leq \ell \scal$, then for any assignment $\sg$, we have $\E{\Topl{\lvecsg}} \leq 2 \ell \scal$.
\item If $\sum_{j \in J} \E{X_j} > \ell \scal$, then for any assignment $\sg$, we have $\E{\Topl{\lvecsg}} > \ell \scal/2$.
\end{enumerate}
\end{lemma}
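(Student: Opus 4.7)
The plan is to derive Lemma~\ref{lem:exceptopl} as an essentially immediate consequence of Theorem~\ref{thm:exptopl} by exploiting the support hypothesis on the $X_j$'s to eliminate the truncation operator $\erv{\cdot}{\scal}$ at the level of the machine loads $Y_i := \lvecsg_i$.

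The key pointwise observation I would establish first is that $\erv{Y_i}{\scal} = Y_i$ as random variables, for every $i \in [m]$ and every assignment $\sg$. Fix an outcome. Each summand $X_j$ contributing to $Y_i$ lies in $\{0\} \cup [\scal,\infty)$ by hypothesis. If every such $X_j$ equals $0$, then $Y_i = 0$ and $\erv{Y_i}{\scal} = 0$. Otherwise at least one contributing $X_j$ is at least $\scal$, and since all the other contributing sizes are nonnegative, $Y_i \geq \scal$, hence $\erv{Y_i}{\scal} = Y_i$. Either way $\erv{Y_i}{\scal} = Y_i$. Taking expectations and summing over machines then yields the identity
\[
\sum_{i \in [m]} \Eb{\erv{Y_i}{\scal}} \;=\; \sum_{i \in [m]} \E{Y_i} \;=\; \sum_{j \in J} \E{X_j},
\]
where the second equality uses linearity of expectation and the fact that $\{\sg^{-1}(i)\}_{i \in [m]}$ partitions $J$.

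With this identity in hand, both parts are immediate. Since $\{X_j\}_{j \in J}$ are independent, the load vector $Y = \lvecsg$ follows a product distribution on $\Rp^m$, so Theorem~\ref{thm:exptopl} is applicable. For part (i), the hypothesis $\sum_j \E{X_j} \leq \ell \scal$ combined with the identity above gives $\sum_i \Eb{\erv{Y_i}{\scal}} \leq \ell \scal$, and Theorem~\ref{thm:exptopl}(i) then yields $\E{\Topl{\lvecsg}} \leq 2 \ell \scal$. For part (ii), the hypothesis $\sum_j \E{X_j} > \ell \scal$ gives $\sum_i \Eb{\erv{Y_i}{\scal}} > \ell \scal$, and Theorem~\ref{thm:exptopl}(ii) yields $\E{\Topl{\lvecsg}} > \ell \scal/2$.

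I do not anticipate any genuine obstacle: the only nontrivial content is the pointwise identity $\erv{Y_i}{\scal} = Y_i$, and this is a one-line case analysis enabled entirely by the ``gap'' in the support of each $X_j$ around the interval $(0,\scal)$. The role of the lemma in the larger proof is precisely to isolate the ``easy'' sub-instance on which $\E{\erv{Y_i}{\scal}}$ requires no further control and any assignment is an $O(1)$-approximation in both directions.
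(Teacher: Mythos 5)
Your proof is correct and matches the paper's argument essentially line for line: both establish that $Y_i$ is supported on $\{0\} \cup [\scal,\infty)$ (so that $\erv{Y_i}{\scal} = Y_i$), derive the identity $\sum_i \E{\erv{Y_i}{\scal}} = \sum_j \E{X_j}$, and invoke Theorem~\ref{thm:exptopl}. The only difference is that you spell out the one-line case analysis behind the support claim, which the paper leaves implicit.
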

\begin{proof}
Fix an assignment $\sg$ and let $Y := \lvecsg$ denote the induced random load vector.
By Theorem~\ref{thm:exptopl}, it suffices to argue that $\sum_{i \in [m]} \E{\erv{Y_i}{\scal}} = \sum_{j \in J} \E{X_j}$.
Since the distribution of each $X_j$ is supported on $\{0\} \cup [\scal,\infty)$, the distribution of the $i^\mathrm{th}$ machine's load $Y_i = \sum_{j \in J : \sg(j) = i} X_j$ is also supported on $\{0\} \cup [\scal,\infty)$.
Thus, 
\[
\sum_{i \in [m]} \E{\erv{Y_i}{\scal}} = \sum_{i \in [m]} \E{Y_i} = \sum_{i \in [m]} \sum_{ j : \sg(j) = i} \E{X_j} = \sum_{j \in J} \E{X_j},
\] 
and so both parts follow.
\end{proof}

Thus, the difficulty in bounding $\E{\erv{Y_i}{\scal}}$ (where
$Y_i=\lvecsg_i$) arises when the $X_j$ job random variables are supported on $[0,\scal)$. 
We need the notion of \emph{effective size} to deal with such difficult
instances. 

\begin{definition}[Effective Size] \label{def:effsize}
For a nonnegative random variable $\scrv$ and a scalar parameter $\ld > 1$, the $\ld$-effective size $\eff{\scrv}$ of $\scrv$ is defined as $\log_\ld\E{\ld^{\scrv}}$. 
We define $\eff[1]{\scrv}$ to be $\E{\scrv}$.
\end{definition}

The benefit of effective sizes follows because Lemmas~\ref{lem:smallbeta}
and~\ref{lem:largebeta} (from~\cite{IbrahimpurS20}) show that, for
independent random variables $\{\scrv_j\}$, we can control
$\E{\erv{\bigl(\sum_j\trv{\scrv_j}{\scal}\bigr)}{\scal}}$ by bounding the
effective sizes of certain random variables related to the $\trv{\scrv_j}{\scal}$ random
variables.
We leverage this to obtain the following result, which is implicit in~\cite{IbrahimpurS20}
and is a generalization of Lemma~3.4 from~\cite{KleinbergRT00};
for completeness, we present the proof in Appendix~\ref{append-stochlb}, 
%for the sake of completeness, 
since this statement does not explicitly appear in prior work. 

\begin{lemma} \label{lem:trunctopl}
Let $\scal$ be a positive scalar and let $\ld := \floor{\frac{2m}{\ell}}$.
Consider an instance of stochastic $\topl$-norm load balancing where for each job $j \in J$, the distribution of $X_j$ is supported on $[0,\scal)$ i.e., $\pr{X_j \geq \scal} = 0$.
\begin{enumerate}[(i)]
\item If $\sum_{j \in J} \eff{X_j/4\scal} \leq 8m$, then for any assignment $\sg$ we have $\E{\Topl{\lvecsg}} \leq 32 \cdot \al \cdot \ell \scal$, where $\al := \max\{1, \max_{i \in [m]} \sum_{j : \sg(j) = i} \eff{X_j/4\scal} \}$.
\item If $\sum_{j \in J} \eff{X_j/4\scal} > 8m$, then for any assignment $\sg$ we have $\E{\Topl{\lvecsg}} > \ell \scal/2$.
\end{enumerate}
\end{lemma}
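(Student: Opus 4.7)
The plan is to deduce both parts by applying Theorem~\ref{thm:exptopl} to the random load vector $Y := \lvecsg$, choosing an appropriate threshold $\tau$ in each part and then controlling the separable expression $\sum_{i \in [m]} \E{\erv{Y_i}{\tau}}$ using effective-size estimates. The core tool is the standard effective-size Chernoff bound: since the jobs assigned to machine $i$ are mutually independent and $[0,\scal)$-bounded,
\[
\Eb{\ld^{Y_i/(4\scal)}} \,=\, \prod_{j:\sg(j)=i} \Eb{\ld^{X_j/(4\scal)}} \,=\, \ld^{A_i}, \quad\text{where } A_i \,:=\, \sum_{j:\sg(j)=i} \effb{X_j/(4\scal)},
\]
so Markov's inequality applied to $\ld^{Y_i/(4\scal)}$ yields $\prob{Y_i \geq t} \leq \ld^{A_i - t/(4\scal)}$ for every $t \geq 0$.

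For part (i), I would invoke Theorem~\ref{thm:exptopl}(i) at the threshold $\tau := 4(\al + 2)\scal$. Integrating the tail bound gives
$\E{\erv{Y_i}{\tau}} \leq \tau\cdot\prob{Y_i \geq \tau} + \int_\tau^\infty \prob{Y_i \geq t}\,dt \leq \ld^{A_i - \tau/(4\scal)}(\tau + 4\scal/\ln\ld)$. Using $A_i \leq \al$ and $\tau/(4\scal) = \al + 2$, each summand is at most $\ld^{-2}(\tau + 4\scal/\ln\ld)$; summing over $i$ and substituting $\ld = \floor{2m/\ell}$ (which gives $m/\ld^2 \leq \ell/2$), routine arithmetic yields $\sum_i \E{\erv{Y_i}{\tau}} \leq \ell\tau$, and Theorem~\ref{thm:exptopl}(i) then delivers $\E{\Topl{Y}} \leq 2\ell\tau \leq 32\al\ell\scal$.

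For part (ii), the plan is to apply Theorem~\ref{thm:exptopl}(ii) at threshold $\scal$, which requires establishing $\sum_i \E{\erv{Y_i}{\scal}} > \ell\scal$ for every assignment $\sg$. This lower-bound direction is the main obstacle, since effective size is natively an upper-tail device. Paralleling Lemmas~\ref{lem:smallbeta} and \ref{lem:largebeta} of \cite{IbrahimpurS20}, the argument splits according to whether the $X_j$'s are \emph{mean-dominated} or \emph{variance-dominated}. In the mean-dominated regime, the chord bound $\ld^z \leq 1 + 4z(\ld^{1/4}-1)$ valid on $z \in [0,1/4]$ combined with $\log_\ld(1+x) \leq x/\ln\ld$ converts the hypothesis $\sum_j \eff{X_j/(4\scal)} > 8m$ into a lower bound on $\sum_j \E{X_j}$, which, together with the elementary $\E{\erv{Y_i}{\scal}} \geq \E{Y_i} - \scal$ and $\sum_i \E{Y_i} = \sum_j \E{X_j}$, completes the argument. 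In the complementary variance-dominated regime, where typical jobs behave like Bernoulli spikes of scale close to $\scal$, a Paley--Zygmund second-moment argument applied to $\ld^{Y_i/(4\scal)}$ lower-bounds $\prob{Y_i \geq \scal}$ on each machine carrying enough effective size, and summing gives $\sum_i \E{\erv{Y_i}{\scal}} \geq \scal\sum_i \prob{Y_i \geq \scal} > \ell\scal$. The constants $8m$ in the hypothesis, $\floor{2m/\ell}$ in $\ld$, and the scaling factor $4\scal$ on the $X_j$'s are tuned precisely so that these two regimes cleanly cover all assignments; verifying this calibration is the delicate step in the proof.
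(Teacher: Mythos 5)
Your part (i) is a correct, self-contained alternative to the paper's. The paper applies the cited Lemma~\ref{lem:smallbeta} (the effective-size tail bound and its refinement $\E{\erv{Z}{\eff{Z}+1}} \leq (\eff{Z}+3)/\ld$) to the scaled loads $Y_i = \lvecsg_i/(4\scal)$ at threshold $\al+1$, which immediately gives $\sum_i \E{\erv{Y_i}{\al+1}} \leq (\al+3)m/\ld \leq (\al+3)\ell$; you instead integrate the Markov tail bound on $\ld^{Y_i/(4\scal)}$ by hand at threshold $4(\al+2)\scal$. Both yield the same invocation of Theorem~\ref{thm:exptopl}(i) and (after checking the arithmetic, which does go through) a bound of the form $O(\al)\ell\scal$; your route avoids citing Lemma~\ref{lem:smallbeta} at the cost of a less clean constant computation, but it is sound.

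Part (ii), however, is where your proposal has a genuine gap. You correctly identify that one must show $\sum_i \E{\erv{Y_i}{\scal}} > \ell\scal$, that this is a \emph{lower} bound on an exceptional expectation — a converse to the Chernoff-style direction — and that this is the nontrivial step. But you then only sketch a strategy (mean-dominated regime via a chord bound on $\ld^z$, variance-dominated regime via Paley--Zygmund) and explicitly defer the verification that these two regimes cover all cases and that the constants $8m$, $\floor{2m/\ell}$, $4\scal$ are calibrated to make it work, calling that ``the delicate step.'' That delicate step \emph{is} the proof of part (ii); as written, you have not proved the statement. The paper sidesteps all of this by invoking Lemma~\ref{lem:largebeta} (Lemma 3.9 of \cite{IbrahimpurS20b}), which is precisely the converse bound you need: for independent $[0,\scal]$-bounded summands, $\E{\erv{\sumrv}{\scal}} \geq \frac{\scal}{4\ld}\bigl(\sum_j \eff{\scrv_j/4\scal} - 6\bigr)$. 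Applying it to each machine's scaled load and summing over $i$, the hypothesis $\sum_j \eff{X_j/4\scal} > 8m$ together with $\ld \geq 2m/\ell$ directly gives $\sum_i \E{\erv{Y_i}{1}} > (8m-6m)/(4\cdot 2m/\ell) = \ell/4$, and Theorem~\ref{thm:exptopl}(ii) finishes. In effect you are attempting to reprove Lemma~\ref{lem:largebeta} from scratch, and without carrying that out, part (ii) is not established. Moreover, the chord bound alone cannot close the mean-dominated case for all $\ld$ (the ratio $\ln\ld/(\ld^{1/4}-1)$ degrades as $\ld$ grows), so the dichotomy genuinely needs to be proved, not asserted.
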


Lemmas~\ref{lem:exceptopl}~and~\ref{lem:trunctopl} yield the following approach for
obtaining a constant-factor approximation for \stochflb[\topl]. 
Let $\OPT_\ell$ denote the optimal value. 
For a given scalar $\scal$, ``split'' each job $j$ as $X_j = \trv{X_j}{\scal} + \erv{X_j}{\scal}$.
This yields the {\em exceptional sub-instance} with the exceptional job-variables
$\{\erv{X_j}{\scal}\}_j$, and the {\em truncated sub-instance} with the truncated
job-variables $\{\trv{X_j}{\scal}\}_j$. 
Clearly, for any assignment $\sg$, if $Y$ and $Y'$ denote the respective load vectors in these two sub-instances, we have $\lvecsg = Y + Y'$.

Let $\ld=\floor{\frac{2m}{\ell}}$.
Now, we can use binary search to find the \emph{right} threshold $t_\ell$, such that for
$\scal=t_\ell$, the conditions in part (i) of Lemmas~\ref{lem:exceptopl}
and~\ref{lem:trunctopl} hold, and for some $\scal \geq t_\ell/(1+\eps)$, the opposite is
true. 
This is because Lemmas~\ref{lem:exceptopl} and~\ref{lem:trunctopl} imply that for $\scal$
large enough, the conditions in part (i) of these lemmas hold; we flesh out the 
binary-search interval at the end of Section~\ref{subsec:reduction}.
So we have:  
(1) $\sum_{j \in J} \E{\erv{X_j}{t_\ell}} \leq \ell t_\ell$ and 
$\sum_{j \in J} \eff{\trv{X_j}{t_\ell}/4t_\ell} \leq 8m$; and
(2) for some $\scal\geq t_\ell/(1+\eps)$, 
we have $\sum_{j \in J} \E{\erv{X_j}{\scal}}>\ell\scal$ or
$\sum_{j \in J} \eff{\trv{X_j}{\scal}/4\scal}>8m$. 
Due to (2), we obtain that the optimum value for the exceptional or truncated sub-instance
is $\Omega(\ell t_\ell)$, and hence $\OPT_\ell=\Omega(\ell t_\ell)$.
Due to (1), by part (i) of Lemma~\ref{lem:exceptopl}, the expected $\topl$-load of the
assignment $\sg$ in the exceptional sub-instance is $O(\ell t_\ell)$. 
Also, considering {\em deterministic} scheduling with job sizes 
$p_j:=\eff{\trv{X_j}{t_\ell}/4t_\ell}$ (which is at most $1$), we can easily compute an 
assignment $\sg$ that assigns total $p_j$-load at most $\sum_{j\in J}p_j/m+1\leq 9$ on any
machine; part (i) of Lemma~\ref{lem:trunctopl} then shows that the expected $\topl$-load
of $\sg$ in the truncated sub-instance is also $O(\ell t_\ell)$ (at most 
$32\cdot 9\cdot\ell t_\ell$).
Therefore, we have $\E{\topl(\lvecsg)} = O(\ell t_\ell)$, showing that $\sg$ is an
$O(1)$-approximate solution to the given instance of \stochflb[\topl]. 

\end{subsection}

\begin{subsection}{\boldmath Proof of Theorem~\ref{thm:simultoplapx}: simultaneous
approximation for \stochlb via vector scheduling} \label{subsec:reduction} 

We now prove Theorem~\ref{thm:simultoplapx} and, as a consequence, Theorem~\ref{thm:simulapx} as well. 
Recall that %in Theorem~\ref{thm:simultoplapx} we have a set of $n$ stochastic jobs 
%$\{X_j\}_{j \in J}$ and $m$ identical machines, and 
we seek an assignment $\sg$ that is
simultaneously a $O(\log \log m)$-approximation to \stochflb[{\topl}] for all
$\ell \in \POS = \{1,2,4,\dots,2^{\floor{\log_2 m}} \}$. 
Let $\A$ be a $\gm(d)$-approximation algorithm for the $d$-dimensional vector scheduling
problem with respect to the natural lower bound $\LB$. 
As we show below, we use $\A$ in a black-box fashion, so if $\A$ is deterministic
(respectively randomized), then our %algorithm for 
simultaneous-approximation for \stochlb is also deterministic (respectively randomized). 

The idea is simple: we essentially follow the approach used to obtain an
$O(1)$-approximation for \stochflb[\topl] (outlined above) for each $\ell\in\POS$
separately.
More precisely, we first find the right $t_\ell$, for each $\ell\in\POS$ separately, via
binary search. That is, for each $\ell\in\POS$, letting
$\ld_\ell=\floor{\frac{2m}{\ell}}$,  
we have that: (i) $\sum_{j \in J} \E{\erv{X_j}{t_\ell}} \leq \ell t_\ell$,
(ii) $\sum_{j \in J} \eff[\ld_\ell]{\trv{X_j}{t_\ell}/4t_\ell} \leq 8m$, and (iii)
$\OPT_\ell=\Omega(\ell t_\ell)$.  
We defer the details of this binary search to the end of the proof.
If we consider the exceptional and truncated sub-instances for an index $\ell\in\POS$,
then, as before, the expected $\topl$-load for the exceptional sub-instance is 
$O(\ell t_\ell)$, under any assignment. So we only need to consider the truncated
sub-instances for $\ell\in\POS$ (with the truncated random variables
$\{\trv{X_j}{t_\ell}\}_j$). 

Define $p_{j,\ell}:=\eff[\ld_\ell]{\trv{X_j}{t_\ell}/4t_\ell}$
for each job $j\in J$ and index $\ell\in\POS$.
By Lemma~\ref{lem:trunctopl} (i), 
it order to obtain expected $\topl$-load of $O(\al)\cdot\OPT_\ell$ for each
$\ell\in\POS$, it suffices to find an assignment $\sg$ such that, for each $\ell\in\POS$,
the total $p_{j,\ell}$-load on each machine $i$ is at most $\al$. In other words,
considering the {\em job vectors} $p_j=(p_{j,\ell})_{\ell\in\POS}\in\Rp^\POS$ for each job
$j\in J$, we seek a solution to this %$|\POS|$-dimensional 
{\em vector-scheduling instance} with makespan $\al$. 

When we have only one index $\ell$ (i.e., \stochflb[\topl]), it is easy to
argue (e.g., using Graham's list scheduling) that $\sum_{j\in J}p_{j,\ell}=O(m)$
and $p_{j,\ell}\leq 1$ for all $j,\ell$, implies that we can assign the jobs so that the
total $p_{j,\ell}$-load on each machine is $O(1)$. 
In our vector-scheduling instance, we have a bound on the total-load $p_{j,\ell}$ load on
each coordinate $\ell$ (due to (ii)), and we know that $p_{j,\ell}\leq 1$ for all $j,\ell$.
Thus, the natural lower bound $\LB$ for this vector-scheduling instance is at most $8$, and what we now need is an {\em $\LB$-relative approximation algorithm for vector scheduling}.
We can therefore utilize $\A$ to obtain makespan $\gm(\log m)$ for our vector-scheduling
instance, which thus yields an $O(\gm(\log m))$ approximation to \stochflb[\topl] for all
$\ell\in\POS$. Theorem~\ref{thm:logdapx} yields $\gm=O(\log d)$, which yields a
simultaneous $O(\log\log m)$-approximation.

\medskip
Finally, we furnish the details of the binary-search procedure to find the $t_\ell$s.
Fix an index $\ell\in\POS$.
We first establish suitable upper and lower bounds for $t_\ell$.
Define $\kp := \max_{j \in J} \E{X_j}$ (which can be easily computed from input data). 
Let $\vecrv$ denote the random load vector induced by an optimal solution and
$\OPT_\ell:=\E{\topl(\vecrv)}$ denote the optimal value.
By the contrapositive of part (ii) of Lemmas~\ref{lem:exceptopl}~and~\ref{lem:trunctopl},
for any $\scal \geq 2\OPT_\ell/\ell$ the following inequalities hold: $\sum_{j \in J}
\E{\erv{X_j}{\scal}} \leq \ell \scal$ and $\sum_{j \in J}
\eff[\floor{\frac{2m}{\ell}}]{\trv{X_j}{\scal}/4\scal} \leq 8m$. 
In particular, since $\OPT_\ell \leq \E{\sum_{i \in [m]} W_i} = \sum_{j \in J} \E{X_j} \leq n \kp$,
these two inequalities hold for $\scal=\hi:=2n\kp \geq 2 \OPT_\ell/\ell$.
Next, observe that for any job $k$ and $\scal = \E{X_k}/(m+2)$, we have
\[
\sum_j \E{\erv{X_j}{\scal}} \geq \E{\erv{X_k}{\scal}} \geq \E{X_k} - \scal = (m+1) \scal > \ell \scal.
\]
Thus, for any $\scal \leq \low := \kp/(m+2)$, we have $\sum_j \E{\erv{X_j}{\scal}} > \ell \scal$.

Let $\eps > 0$ be a small constant (say, $1/1000$).
Then, by doing a binary search in the interval $[\low,\hi]$,
we can find scalars $t_\ell$ and $t'_\ell$ satisfying:
\begin{enumerate}[(i)]
\item $t'_\ell < t_\ell \leq t'_\ell(1+\eps)$.
\item $\sum_{j \in J} \E{\erv{X_j}{t_\ell}} \leq \ell t_\ell$ and 
$\sum_{j \in J} \eff[\ld_\ell]{\trv{X_j}{t_\ell}/4t_\ell} \leq 8m$.
\item $\sum_{j \in J} \E{\erv{X_j}{t'_\ell}}>\ell t'_\ell$ or
$\sum_{j \in J} \eff[\ld_\ell]{\trv{X_j}{t'_\ell}/4t'_\ell}>8m$.
\end{enumerate}
Properties (i) and (iii) imply that $\OPT_\ell>\ell t'_\ell/2\geq\ell t_\ell/2(1+\eps)$. 
\qquad\qedsymbol

\begin{remark} \label{badevent}
If the algorithm $\A$ is randomized (as in Theorem~\ref{thm:logdapx}), then we detect if the assignment $\sg$ returned by solving vector scheduling satisfies $\sum_{j : \sg(j) = i} p_{j,\ell} \leq O(\log |\POS|)$ for all $\ell \in \POS$ and $i \in [m]$, i.e., is the approximation guaranteed by $\A$, and if not, return failure. Then, the success probability in the statement of Theorem~\ref{thm:simultoplapx} is the probability of success of $\A$ (which lower bounds the probability of obtaining a simultaneous $O(\log\log m)$-approximation). 
\end{remark}

\end{subsection}

\bibliographystyle{plainurl}
\bibliography{sosa_sub_arxiv}

\appendix \label{appstart}

\section{Omitted proofs from Section~\ref{sec:stochlb}} \label{append-stochlb}

We give a proof of Lemma~\ref{lem:trunctopl} in this section.
The following two technical lemmas on effective sizes will be useful to us.

\begin{lemma}[Lemma~3.8 in~\cite{IbrahimpurS20b}] \label{lem:smallbeta}
Let $\scrv$ be a nonnegative random variable and $\lambda \geq 1$. 
If $\eff{\scrv} \leq b$, then $\pr{\scrv \geq b+c} \leq \lambda^{-c}$, for any
$c \geq 0$.
Furthermore, if $\lambda \geq 2$, then $\E{\erv{\scrv}{\eff{\scrv}+1}} \leq \bigl(\eff{\scrv}+3\bigr)/\ld$. 
\end{lemma}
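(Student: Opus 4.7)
The plan is to prove the two assertions separately, exploiting the fact that the effective-size definition is tailored so that $\lambda^{\scrv}$ has a controlled first moment.

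For the first assertion, I would simply apply Markov's inequality to the nonnegative random variable $\lambda^{\scrv}$. By definition $\E{\lambda^{\scrv}} = \lambda^{\eff{\scrv}} \leq \lambda^{b}$, so for any $c \geq 0$, monotonicity of $x \mapsto \lambda^{x}$ (using $\lambda \geq 1$) gives
\[
\pr{\scrv \geq b+c} = \pr{\lambda^{\scrv} \geq \lambda^{b+c}} \leq \frac{\E{\lambda^{\scrv}}}{\lambda^{b+c}} \leq \lambda^{-c}.
\]
This is a one-line argument and should cause no difficulty.

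For the second assertion, set $b := \eff{\scrv}$ and use the standard tail-integral identity for a nonnegative random variable. Noting that $\erv{\scrv}{b+1}$ equals $\scrv \cdot \boldone_{\scrv \geq b+1}$, we have
\[
\E{\erv{\scrv}{b+1}} = (b+1) \cdot \pr{\scrv \geq b+1} + \int_{b+1}^{\infty} \pr{\scrv > t}\, dt.
\]
I would then invoke the first assertion with $c = 1$ to bound the point mass and with $c = t - b$ for the integrand, yielding
\[
\E{\erv{\scrv}{b+1}} \leq \frac{b+1}{\lambda} + \int_{b+1}^{\infty} \lambda^{-(t-b)} \, dt = \frac{b+1}{\lambda} + \frac{1}{\lambda \ln \lambda}.
\]

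The only subtlety (and the reason the hypothesis $\lambda \geq 2$ enters) is in turning the residual integral $1/(\lambda \ln \lambda)$ into a clean term bounded by $2/\lambda$. For $\lambda \geq 2$ we have $\ln \lambda \geq \ln 2 > 1/2$, so $1/\ln \lambda < 2$, and therefore
\[
\E{\erv{\scrv}{b+1}} \leq \frac{b+1}{\lambda} + \frac{2}{\lambda} = \frac{b+3}{\lambda} = \frac{\eff{\scrv}+3}{\lambda},
\]
as required. The main (mild) obstacle is just being careful with the tail-integration decomposition and the constant $2 > 1/\ln 2$; everything else is mechanical once the first assertion is established.
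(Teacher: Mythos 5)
Your proof is correct. Part (1) is the standard exponential Markov argument, and Part (2) via the tail-integral identity $\E{\erv{\scrv}{b+1}} = (b+1)\pr{\scrv \geq b+1} + \int_{b+1}^{\infty}\pr{\scrv > t}\,dt$ combined with the tail bound from Part (1) is sound; your arithmetic checks out (the residual integral evaluates to $1/(\lambda\ln\lambda) \leq 2/\lambda$ once $\lambda \geq 2 > \sqrt{e}$). The paper does not reproduce a proof of this statement (it cites it as Lemma~3.8 of~\cite{IbrahimpurS20b}), so a line-by-line comparison is not possible here, but your route is the natural one. A closely-related discrete variant — bounding $\erv{\scrv}{b+1} \leq (b+1)\boldone_{\scrv \geq b+1} + \sum_{k\geq 1}\boldone_{\scrv > b+k}$, summing the geometric series $\sum_{k\geq 1}\lambda^{-k} = 1/(\lambda-1)$, and using $1/(\lambda-1)\leq 2/\lambda$ for $\lambda\geq 2$ — gives exactly the same constant $3$ with no integration, and is the form one more often sees in this line of work (cf.\ Kleinberg–Rabani–Tardos); it avoids the slightly delicate numerical fact $\ln 2 > 1/2$ in favor of the algebraic inequality $\lambda\leq 2(\lambda-1)$. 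Either route is perfectly acceptable.
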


Lemma~\ref{lem:largebeta} below is a partial converse to Lemma~\ref{lem:smallbeta}.

\begin{lemma}[Lemma~3.9 in~\cite{IbrahimpurS20b}] \label{lem:largebeta}
Let $\ld \geq 1$ be an integer and $\scal$ be a positive scalar.
Let $\sumrv = \sum_j \scrv_j$ be a sum of independent $[0,\scal]$-bounded random
variables.
Then, 
\[
\E{\erv{\sumrv}{\scal}} \geq \frac{\scal}{4 \ld} \cdot \left(\sum_j \eff{\frac{\scrv_j}{4\scal}} - 6\right).
\]
\end{lemma}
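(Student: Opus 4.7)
The plan is to reduce both parts to the corresponding part of Theorem~\ref{thm:exptopl}, which bounds $\E{\Topl{Y}}$ in terms of the separable quantity $\sum_i \E{\erv{Y_i}{\tau}}$ where $Y=\lvecsg$. The workhorse algebraic fact I will use throughout is the \emph{additivity of effective size for independent sums}: since $Y_i = \sum_{j:\sg(j)=i} X_j$ with the $X_j$'s independent, factoring the moment generating function gives
\[
\eff{Y_i/4\scal} \;=\; \log_\ld \prod_{j:\sg(j)=i} \E{\ld^{X_j/4\scal}} \;=\; \sum_{j:\sg(j)=i} \eff{X_j/4\scal}.
\]
So both the per-machine hypothesis of part (i) (summand at most $\al$) and the aggregate hypothesis of part (ii) translate directly into effective-size bounds on the $Y_i$'s.

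For part (i), I would first note that $\eff{Y_i/4\scal}\leq\al$ for every machine $i$, by definition of $\al$ and additivity, and that $\ld=\floor{2m/\ell}\geq 2$ whenever $\ell\leq m$, so Lemma~\ref{lem:smallbeta} applies. The tail-bound half of that lemma (applied to the random variable $Y_i/4\scal$ with $b=\al$) gives $\pr{Y_i\geq 4\scal(\al+c)}\leq\ld^{-c}$ for every $c\geq 0$. Integrating this tail past a threshold $\tau=4\scal(\al+c_0)$ for a suitable absolute constant $c_0$ yields $\E{\erv{Y_i}{\tau}} = O(\scal/\ld)$. Summing over $i$, and using $m/\ld\leq\ell$ (since $\ld\geq m/\ell$ for $\ell\leq m$), gives $\sum_i \E{\erv{Y_i}{\tau}}\leq O(\ell\scal)\leq \ell\tau$ once $c_0$ is picked large enough to absorb constants. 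Theorem~\ref{thm:exptopl}(i) then gives $\E{\Topl{\lvecsg}}\leq 2\ell\tau = O(\al\ell\scal)$, matching the claimed $32\al\ell\scal$ after careful bookkeeping of $c_0$.

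For part (ii), I would apply Lemma~\ref{lem:largebeta} per machine. Since $\pr{X_j\geq\scal}=0$, the machine load $Y_i$ is a sum of independent $[0,\scal]$-bounded random variables, and Lemma~\ref{lem:largebeta} (with parameter $\scal$) gives
\[
\E{\erv{Y_i}{\scal}} \;\geq\; \frac{\scal}{4\ld}\,\Bigl(\sum_{j:\sg(j)=i}\eff{X_j/4\scal} - 6\Bigr).
\]
Summing over $i$ and using the hypothesis $\sum_j \eff{X_j/4\scal}>8m$ yields $\sum_i \E{\erv{Y_i}{\scal}} > \frac{(8m-6m)\scal}{4\ld}=\frac{m\scal}{2\ld}$. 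Using $\ld\leq 2m/\ell$ this is $>\ell\scal/4$. Invoking Theorem~\ref{thm:exptopl}(ii) (using that $\tau\mapsto\E{\erv{Y_i}{\tau}}$ is non-increasing, so we can feed in a slightly smaller threshold) then gives the desired lower bound $\E{\Topl{\lvecsg}}>\Omega(\ell\scal)$ of the claimed form $\ell\scal/2$.

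The step I expect to be the main bookkeeping obstacle is pinning down the exact constants in part (ii): the natural chain above gives $\E{\Topl{\lvecsg}}>c\ell\scal$ for some $c<1/2$, and matching the precise $\ell\scal/2$ likely requires either a sharper statement of Lemma~\ref{lem:largebeta} (tightening the $\frac{1}{4\ld}$ prefactor or the $-6$ correction, which one gets from the proof of Lemma~\ref{lem:largebeta} by a more careful threshold choice) or a tighter invocation of Theorem~\ref{thm:exptopl}(ii). The conceptual structure, however — additivity of $\eff{\cdot}$ for independent sums, a per-machine application of Lemma~\ref{lem:smallbeta} or Lemma~\ref{lem:largebeta}, and finally Theorem~\ref{thm:exptopl} — is otherwise clean.
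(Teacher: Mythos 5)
You have written a proof of the wrong statement. The lemma at hand, Lemma~\ref{lem:largebeta}, is a self-contained probabilistic inequality: for a sum $\sumrv=\sum_j\scrv_j$ of independent $[0,\scal]$-bounded random variables, $\E{\erv{\sumrv}{\scal}}\geq\frac{\scal}{4\ld}\bigl(\sum_j\eff{\scrv_j/4\scal}-6\bigr)$. It involves no machines, no assignment $\sg$, no $\topl$ norms, and no parts (i)/(ii). What you have sketched instead is a proof of Lemma~\ref{lem:trunctopl}, and as such it does track the paper's appendix argument for \emph{that} lemma quite faithfully (additivity of $\eff{\cdot}$ over independent sums, Lemma~\ref{lem:smallbeta} per machine for the upper bound, Lemma~\ref{lem:largebeta} per machine for the lower bound, then Theorem~\ref{thm:exptopl}). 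But read as an argument for Lemma~\ref{lem:largebeta} it is not merely off target, it is circular: your ``part (ii)'' invokes Lemma~\ref{lem:largebeta} itself as a black box, i.e., you assume exactly the statement you were asked to prove.

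What a proof of Lemma~\ref{lem:largebeta} actually requires is a \emph{converse} to the upper-tail estimate of Lemma~\ref{lem:smallbeta}: one must show that if the aggregate effective size $\sum_j\eff{\scrv_j/4\scal}$ is large, then $\sumrv$ exceeds the threshold $\scal$ often enough, and by enough, that its expected exceptional part is at least $\frac{\scal}{4\ld}$ per unit of effective size, up to the additive slack of $6$. Nothing in your write-up produces a lower bound of this kind; Theorem~\ref{thm:exptopl} and the per-machine decomposition are simply not relevant to it. Note also that the paper itself offers no proof of this lemma --- it is imported verbatim as Lemma~3.9 of the cited prior work --- so to prove it you must engage directly with that anti-concentration-style argument (for instance, exploiting that each $\eff{\scrv_j/4\scal}$ is small so the variables can be grouped into blocks of roughly unit effective size, each of which contributes $\Omega(\scal/\ld)$ to $\E{\erv{\sumrv}{\scal}}$), rather than with the load-balancing reduction that merely \emph{uses} the lemma.
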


\begin{proofof}{Lemma~\ref{lem:trunctopl}}
Recall that $\ell \in [m]$ is a positive integer, $\scal$ is a positive scalar, and $\ld := \floor{\frac{2m}{\ell}}$.
Fix an assignment $\sg$ of jobs to machines, and let $Y := \frac{1}{4\scal} \cdot \lvecsg$ denote the induced load vector scaled down by a factor $4\scal$.
For the first part, suppose that the total effective size of all jobs is at most $8m$: $\sum_{j \in J} \eff{\frac{X_j}{4\scal}} \leq 8m$.
We want to argue that the expected $\topl$ norm of the induced load vector is bounded from above.
Recall that $\al := \max\{1, \max_{i \in [m]} \sum_{j : \sg(j) = i} \eff{X_j/4\scal} \}$ is the maximum of $1$ and the makespan of $\sg$ w.r.t. $\eff{\cdot}$-sizes.
Since $X_j$s are independent random variables, for any $i \in [m]$ we have $\eff{Y_i} = \sum_{j: \sg(j) = i} \eff{\frac{X_j}{4\scal}} \leq \al$.
By Lemma~\ref{lem:smallbeta}:
\[
\E{\erv{Y_i}{\al+1}} \leq \E{\erv{Y_i}{\eff{Y_i}+1}} \leq \frac{\eff{Y_i}+3}{\ld} \leq \frac{\al+3}{\ld}.
\]
Summing over all $i \in [m]$, we get $\sum_{i \in [m]} \E{\erv{Y_i}{\al+1}} \leq (\al+3) \frac{m}{\ld} \leq (\al+3)\ell$, where we use that $\ld = \floor{2m/\ell} \geq m/\ell$.
Using Theorem~\ref{thm:exptopl}(i), with the scalar parameter $\al+3$, we get $\E{\Topl{Y}} \leq 2(\al+3)\ell$.
By homogeneity of norms (and noting that $\al \geq 1$), we get 
\[
\E{\Topl{\lvecsg}} = 4 \scal \E{\Topl{Y}} \leq 8 (\al+3) \ell \scal \leq 32 \cdot \al \cdot \ell \scal.
\]

For the second part, suppose that the total effective size of all jobs is larger than $8m$ i.e., $\sum_{j \in J} \eff{\frac{X_j}{4\scal}} > 8m$.
We want to argue that the expected $\topl$ norm of the induced load vector is bounded from below.
For any $i \in [m]$, $Y_i = \sum_{j: \sg(j) = i} \bigl(X_j/4\scal\bigr)$ is a sum of independent $[0,1]$-bounded random variables.
By Lemma~\ref{lem:largebeta}, we get:
\[
\sum_{i \in [m]} \E{\erv{Y_i}{1}} \geq \frac{1}{4\ld} \cdot \sum_{i \in [m]} \left(\sum_{j : \sg(j) = i} \eff{\frac{X_j}{4\scal}} - 6\right) = \frac{1}{4\ld} \cdot \left(\sum_{j \in J} \eff{\frac{X_j}{4\scal}} - 6m \right) > \frac{(8m-6m)}{4\cdot \frac{2m}{\ell}} = \ell/4,
\]
where we use $\ld \geq 2m/\ell$.
Using Theorem~\ref{thm:exptopl}(ii), with the scalar parameter $1/4$, we get $\E{\Topl{Y}} > \ell/8$.
Subsequently, $\E{\Topl{\lvecsg}} = 4 \scal \E{\Topl{Y}} > \ell \scal/2$.
\end{proofof}

\end{document}